\renewcommand{\phi}{\varphi}
\newcommand{\MA}{\text{MA}}
\newcommand{\MAT}{\text{MAT}}
\newtheorem{lemma}{Lemma}
\newtheorem{proposition}[lemma]{Proposition}
\newtheorem{theorem}[lemma]{Theorem}
\newtheorem*{lemma*}{Lemma}
\theoremstyle{definition}
\newtheorem{definition}{Definition}
\theoremstyle{remark}
\newtheorem*{remark}{Remark}
\renewcommand{\epsilon}{\varepsilon}
\newtheorem{example}{Example}
\newcommand{\f}[1]{\mathbf{#1}}
\newlength\vecx
\newlength\vecy
\newlength\dist
\begin{document}

\begin{frontmatter}

\title{\LARGE Arc Spline Approximation of Envelopes of Evolving Planar Domains}

\author{Jana Vr\'abl\'{\i}kov\'a$^*$}
\ead{jana.vrablikova@jku.at}
\author{Bert J\"uttler}
\ead{bert.juettler@jku.at}
\cortext[cor]{Corresponding author}
\address{Institute of Applied Geometry, Johannes Kepler University, Linz/Austria}

\begin{abstract}
\par Computing the envelope of deforming planar domains is a significant and challenging problem with a wide range of potential applications. We approximate the envelope using circular arc splines, curves that balance geometric flexibility and computational simplicity. Our approach combines two concepts to achieve these benefits.

First, we represent a planar domain by its medial axis transform (MAT), which is a~geometric graph in Minkowski space $\mathbb R^{2,1}$ (possibly with degenerate branches). We observe that circular arcs in the Minkowski space correspond to MATs of arc spline domains. Furthermore, as a planar domain evolves over time, each branch of its MAT evolves and forms a surface in the Minkowski space. This allows us to reformulate the problem of envelope computation as a problem of computing cyclographic images of finite sets of curves on these surfaces. We propose and compare two pairs of methods for approximating the curves and boundaries of their cyclographic images. All of these methods result in an arc spline approximation of the envelope of the evolving domain.

Second, we exploit the geometric flexibility of circular arcs in both the plane and Minkowski space to achieve a high approximation rate. The computational simplicity ensures the efficient trimming of redundant branches of the generated envelope using a~sweep line algorithm with optimal computational complexity.
\end{abstract}

\begin{keyword}
Envelope, Evolving planar domains, Minkowski space, Circular arcs
\end{keyword}

\end{frontmatter}

\section{Introduction}
Sweeping a planar domain along a predefined trajectory is a powerful tool for e.g. constructing more complex domains and the main challenge is to approximate the envelope of the resulting volume or area. The case where the domain moves under a rigid body motion has been well studied, see e.g. \cite{Machchhar2022} and the references therein. 

Swept volumes can be generalized if we allow the domain to change its size or shape as it moves. We call such domains \emph{evolving domains}. In the literature, the process is sometimes called general sweep.

\subsection{Related work}
Some of the existing methods build upon the techniques for computation of the envelope of a domain undergoing a rigid body motion. For example, Sweep Differential Equation (SDE) and Sweep Envelope Differential Equation (SEDE) were proposed \cite{Blackmore1997,Blackmore1990}. They identify the sweeps with first--order linear ordinary equations. The methods were later generalized to general sweeps \cite{Blackmore1994, Wang2000}. 

Other methods develop the theory for evolving domains separately. Kim and Elber formulate the problem as a polynomial equation in three variables \cite{Kim2000}. Despite considering exact geometries, the technique generates a polyhedral approximation of the surface of the evolving domain. 

In the case of planar evolving domains, a method for obtaining the exact envelope was proposed in \cite{kim1993} for domains that move along a parametric curve trajectory while they evolve, i.e., change their shape parametrically. The algorithm generates curve segments from which the envelope is extracted by a plane sweep algorithm. The plane sweep algorithm proposed by Bentley and Ottmann \cite{Bentley1979} was designed to compute and report intersections in a set of line segments and can be generalized to other primitives. However, for segments of algebraic curves of high degree, the plane sweep algorithm is highly inefficient. 

To overcome the time difficulties, a polygonal approximation of the envelope of an evolving domain was proposed \cite{Ahn1993}. The envelope is again extracted by the plane sweep algorithm from a set of lines generated by the method. The polygonal boundaries are then fitted by cubic splines. 

An incremental algorithm was proposed by \cite{Lee2000}. At each instance of the time parameter, the domain is approximated by a polygonal boundary and the envelope is generated using boolean operations on simple polygons, which can be implemented robustly. The incremental nature of the algorithm produces the envelope at each step, such that no additional removal of redundant parts is necessary, making it useful for interactive shape design.

Our method is based on the interpolation of the medial axis transform. Medial axis (MA) was introduced by Blum \cite{Blum1967} as a tool for shape recognition. It describes a planar domain by the centres of medial disks, i.e., maximum disks that are inscribed in the domain and are tangent to its boundary at at least two points. The medial axis transform (MAT) is defined by the medial axis and the radius function of the medial disks. Besides shape recognition, the medial axis transform has a wide range of applications, for example in domain decomposition, where it can be then used for $G^1$-smooth parameterization of complex domains \cite{Pan2023}.

The medial axis transform can be computed exactly for domains with piecewise linear or piecewise circular boundaries \cite{Aichholzer2010,Chin1995,Evans1998,Lee1982}. For free--form shapes, the MAT can be approximated using a variety of techniques, including tracing methods  \cite{Cao2008,Degen2004,Ramanathan2003}, divide-and-conquer methods  \cite{Aichholzer2009,Choi1997}, and computing the Voronoi diagrams of sampled points \cite{Attali1997, Fabbri2002, Zhu2014}.

In \cite{Wolter92} it is shown that the medial axis of a compact set $\Omega$ with a piecewise $\mathcal C^2$ smooth boundary $\delta\Omega$ is path-connected. If $\delta\Omega$ consists of finitely many real analytic curves, in \cite{Choi97} it is shown that $\MA(\Omega)$ and $\MAT(\Omega)$ are connected geometric graphs with finitely many branches (edges) and vertices.

The branches of the medial axis transform can be seen as curves in
Minkowski space $\mathbb R^{2,1}$
\cite{Degen2004,Pottmann1998}. Minkowski space is a model of Laguerre
geometry, the Euclidean geometry of oriented spheres and
hyperplanes. Laguerre geometry was first introduced by Blaschke
\cite{blaschke1929} and reexamined in a modern manner by Cecil
\cite{Cecil1992}. Minkowski space (also called the cyclographic model)
of Laguerre geometry was described in \cite{Pottmann1998}. The model
estabilishes a one--to--one correspondence between oriented spheres
and points in Minkowski space. The mapping that assigns an oriented
circle to a point in Minkowski space is called the cyclographic
mapping. A cyclographic image of a point of the MAT of a planar domain
is a medial disc and the domain can be recreated as the envelope of
the medial discs.

The particular class of Minkowski Pythagorean-hodograph
(MPH) curves \cite{Moon1999}, which are rationally parameterized
curves that represent branches of the MAT, corresponds to domains that
possess rationally parameterizable domain boundaries.  It has been
generalized to the broader class of RE curves \cite{Bizzarri2016},
which share the rational envelope property with MPH curves.

Our results are also related to earlier work on methods for circle
skinning, see \cite{Kunkli2010} and the references therein. Given a
sequence of circles in a defined admissible position, these methods
find two $G^1$ continuous curves (skins) that touch each circle at a
point. The method of Kunkli and Hoffmann \cite{Kunkli2010} identifies
points of contact with associated tangents, which are then
interpolated by cubic curves, resulting in two $G^1$ continuous
skins. While this construction is rather general, in some cases, the
skins can intersect each other or the input circles. The problem of
self--intersections has been addressed in \cite{Kruppa2019}.

Alternatively, the skins can be obtained with the help of envelope
curves. In two recent papers \cite{Kruppa2020, Kruppa2021}, the
authors use RE curves to obtain rational skins without
self--intersections. Skinning can be generalized to branched sets of
circles, allowing more complicated shapes to be considered.  Bastl et
al. \cite{Bastl2015} consider unions of skins of subsets of circles,
and Kruppa et al.~\cite{Kruppa2019} analyze branching in more detail,
ensuring smooth transitions.

We present methods for generating supersets of the envelopes of
one--parameter families of circles. These supersets are either $C^0$
or $C^1$ continuous curves, which may contain
self--intersections. However, this approach allows us to impose fewer
restrictions on the admissible positions of the circles, and
self-intersections can be easily handled using the line-sweep
algorithm. Furthermore, our approach generalizes process of skinning
circles defined by a curve in $\mathbb R^{2,1}$ to skinning of a
two--parameter system of circles, i.e., an evolving domain represented
by a surface in $\mathbb R^{2,1}$. From this surface, we extract the
relevant one--parameter subsets, that form the superset of the
envelope of the evolving domain.

We approximate the envelopes of evolving domains by circular arcs. Circular arcs are well--established primitives for approximating boundaries of planar domains and they approximate a smooth curve segment with approximation order 3 \cite{Meek1995}. Biarc construction provides a $G^1$ smooth interpolation of a given curve and there are various techniques for biarc approximation of smooth curves \cite{Meek1996,Meek1995} or interpolation of discrete data \cite{Bertolazzi202066,Meek1992}.
In \cite{Sir2006}, the approximation properties of the different techniques were examined. 

Moreover, circular arcs provide many computational advantages \cite{Aichholzer2011}. The construction of Voronoi diagrams with circular arcs as sites \cite{Aichholzer2010VD} and of medial axis for planar shapes with arc spline boundaries \cite{Aichholzer2009,Aichholzer2010} is efficient and robust. For an arc spline domain, i.e., a domain represented by its boundary which consists of circular arcs, it was shown that the arc fibration kernel has again a piecewise circular boundary and a geometric algorithm that computes the arc fibration kernel was proposed \cite{weiss21}.
Self-intersections in offsets were detected and trimmed using biarc approximations of planar curves \cite{Kim2012}. In \cite{Han2019}, an effective algorithm for the Minkowski sum computation of two planar objects with an arc spline boundary was proposed, using interior disc culling. Last but not least, the line sweep algorithm \cite{Bentley1979} can be adapted for computing intersections of circular arcs with the same time complexity as for intersections of line segments.

\subsection{Outline of the method}
We propose an algorithm for arc spline approximation of envelopes of evolving domains. We assume the evolving domains are represented by evolving MATs. The method is based on the following key observations. 
\begin{enumerate}
    \item The envelopes of \emph{evolving worms} can be approximated by arc splines. A planar domain whose MAT is a curve in $\mathbb R^{2,1}$ is called a \textit{worm}. More generally, a worm is any planar domain which is the cyclographic image of a single curve segment in $\mathbb R^{2,1}$. When these domains move and evolve in time, they are called evolving worms. We characterize the envelope of the evolving worms and present methods for approximating them by arc splines.
    \item At every instance of a time parameter, an evolving free--form domain is a union of worms. The envelope of the evolving free--form domain is a subset of the union of the envelopes of evolving worms.
    \item The outer envelope can by extracted by trimming the arc splines. This can be done efficiently by the line sweep algorithm \cite{Bentley1979}.
\end{enumerate}
In this paper, we focus mostly on the first observation. In Section \ref{sec_preliminaries}, we recall Minkowski space $\mathbb R^{2,1}$, Minkowski circles and arcs and we review the medial axis transform. In Section \ref{sec_evolvingworms} we introduce worms and evolving worms and characterize their envelopes. The computation of the envelope relies on interpolation of certain curves in the Minkowski space. We present and compare two pairs of methods for interpolating the curves in the Minkowski space in Section \ref{sec_interpolationmethods}. In all four cases, the worms defined by the interpolants have arc spline boundaries. Section \ref{sec_envComp} describes the computation of the envelope of an evolving worm. In Section \ref{sec_freeform} we apply computation of envelopes of evolving worms to envelopes of evolving free--form domains and present several examples.

\section{Preliminaries}\label{sec_preliminaries}

In this section, we recall Minkowski space $\mathbb R^{2,1}$ and the medial axis transform. We interpret the branches of the medial axis transform as curves in the Minkowski space.

\subsection{Minkowski space $\mathbb R^{2,1}$}\label{Sec2}

The Minkowski space $\mathbb R^{2,1}$ is a real 3-dimensional vector space equipped with the Minkowski inner product, defined for two vectors
\begin{equation}
    \mathbf{v}_1 = (x_1,y_1,r_1)^T, \mathbf{v}_2 = (x_2,y_2,r_2)^T \in \mathbb R^{2,1} \ ,
\end{equation}
by
\begin{equation}
    \langle \mathbf{v}_1,\mathbf{v}_2\rangle_M =\mathbf{v}_1^TG\mathbf{v}_2 = x_1x_2 + y_1y_2 - r_1r_2 \ ,
\end{equation}
where $G = \text{diag}(1,1,-1)$. For a vector $\mathbf{v} \in \mathbb R^{2,1}$, we denote
\begin{equation}
  \|\mathbf v\|_M = \langle \mathbf{v},\mathbf{v}\rangle_M \ .
\end{equation}

The Minkowski inner product defines three types of vectors in $\mathbb R^{2,1}$:
\begin{definition}
 A vector $\mathbf{v} \in \mathbb R^{2,1}$ is called
\begin{itemize}
 \item \textit{space--like}, if $\|\mathbf v\|_M > 0$,
  \item \textit{light--like}, if $\|\mathbf v\|_M = 0$, or
 \item \textit{time--like}, if $\|\mathbf v\|_M < 0$.
\end{itemize}
\end{definition}

Similarly, we distinguish three types of planes in the Minkowski space:
\begin{definition}
 A plane in $\mathbb R^{2,1}$ is called \textit{space--like}, \textit{light--like} or \textit{time--like}, if the restriction of the quadratic form defined by $G$ on the plane is positive definite, indefinite degenerate or indefinite non--degenerate, respectively.
\end{definition}


There is a one--to--one correspondence between planar oriented circles and points in Minkowski space $\mathbb R^{2,1}$. An oriented circle $\mathbf c$ centered at the point $(x,y)^T \in \mathbb R^2$ with a radius $r \in \mathbb R$ is represented by the pair $\mathbf{c} = ((x,y)^T,r)$.

The mapping
\begin{align*}
  \zeta: \mathbb R^{2,1} &\rightarrow \mathbb R^2 \\
  C = (x,y,r)^T &\mapsto c = ((x,y)^T, r) \ ,
\end{align*}
that assigns an oriented circle to a point in Minkowski space $\mathbb R^{2,1}$ is called the \textit{cyclographic mapping}. 
The image of point $(x,y,0)^T \in \mathbb R^{2,1}$ is the point $(x,y)^T \in \mathbb R^2$, which is considered to be a circle with zero radius.

A curve $C(u)=(x(u),y(u),r(u))^T,\ u \in I \subset \mathbb R$, in the Minkowski space defines a one--parameter family of oriented circles in $\mathbb R^2$. 
The cyclographic image of each point $C(u_0),\ u_0 \in I,$ is an oriented circle $c(u_0)$,
\begin{equation*}
    \zeta(C(u_0)) =  c(u_0) = ((x(u_0),y(u_0))^T,r(u_0)) \ .
\end{equation*}
The cyclographic image of the curve $C(u),\ u \in I$, is then defined as
\begin{equation*}
    \zeta(C(u)) = \bigcup_{u\in I} c(u) \ .
\end{equation*}

Depending on the tangent vectors of the curve $C(u)$, the one--parameter family of circles $c(u)$ may have a real envelope. If every tangent vector of $C(u)$ is time--like, the one--parameter family $c(u)$ does not possess a real envelope. The envelope has a single component if all of tangent vectors of $C(u)$ are light--like. In the case when $C(u)$ has only space--like and isolated light--like tangents, the one--parameter family possesses a real envelope with two branches $e_C^+(u)$ and $e_C^-(u),\ u \in I,$ that can be parameterized by 
\begin{equation}\label{eq_envPar}
    e_C^{\pm}(u) = \left(\begin{array}{c}
         x\\y \end{array}\right) - r \frac{r^\prime\left(\begin{array}{c}x^\prime\\y^\prime \end{array}\right) \pm \sqrt{x^{\prime2} +y^{\prime2} - r^{\prime2}}\left(\begin{array}{c}y^\prime\\-x^\prime \end{array}\right)}
         {x^{\prime2}+y^{\prime2}} \ ,
\end{equation}
where $x,y$ and $r$ are functions of the parameter $u \in I$, see \cite{Choi97}.

A point $e_C^{\pm}(u_0)$ for some $u_0 \in I$ is a singular point of the envelope curve $e_C^{\pm}(u)$, if
\begin{equation*}
 \left.\frac{\partial e_C^{\pm}(u)}{\partial u} \right|_{u=u_0} = \left(\begin{array}{c}
         0\\0 \end{array}\right) \ .
\end{equation*}

\begin{remark}
  If the radius function $r(u),\ u \in I$, is constant, the singular points are the points $e_C^{\pm}(u_0)$ that satisfy
\begin{equation*}
 |\kappa(u_0)| = \frac{1} {|r|} \ ,
\end{equation*}
where $\kappa(u),\ u \in I,$ is the signed curvature of the spine curve $(x(u),y(u))^T,\ u \in I$, and $r = r(u),\ u \in I,$ is the constant radius. In particular, if
\begin{equation*}
 \text{sign}(\kappa(u_0)) \cdot \text{sign}(r) = 1,
\end{equation*}
the singular point is the point $e_C^+(u_0)$, otherwise it is the point $e_C^-(u_0)$.
\end{remark}

\medskip
The cyclographic image of a curve segment $C(u)\subset \mathbb R^{2,1}$ parameterized over a closed interval $[a,b] \subset \mathbb R$ defines a domain $\Omega \subset \mathbb R^2$,
\begin{equation*}
    \Omega = \zeta(C(u)) = \bigcup_{u\in [a,b]} c(u) \ .
\end{equation*}
The envelope curves $e_C^{\pm}(u),\ u \in [a,b],$ (if they exist) and the cyclographic images of the endpoints $C(a)$ and $C(b)$ form a \emph{pre--boundary} of the domain $\Omega$. Its boundary $\delta \Omega$ is then a subset of the pre--boundary, obtained by trimming the self--intersections, see Figure \ref{fig_mCurve}.

\begin{figure}[t]
\centering
\begin{minipage}{0.33\textwidth}
\centering
\includegraphics[width=0.95\linewidth]{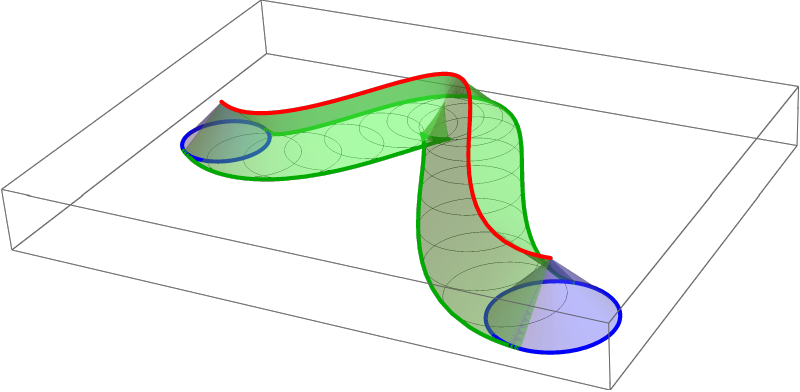}
\end{minipage}%
\begin{minipage}{0.33\textwidth}
  \centering
  \includegraphics[width=0.95\linewidth]{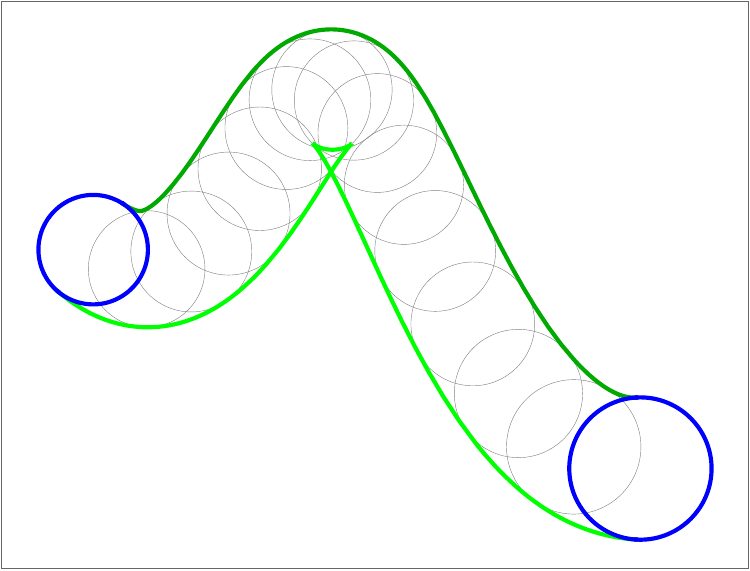}
\end{minipage}%
\begin{minipage}{0.33\textwidth}
  \centering
  \includegraphics[width=0.95\linewidth]{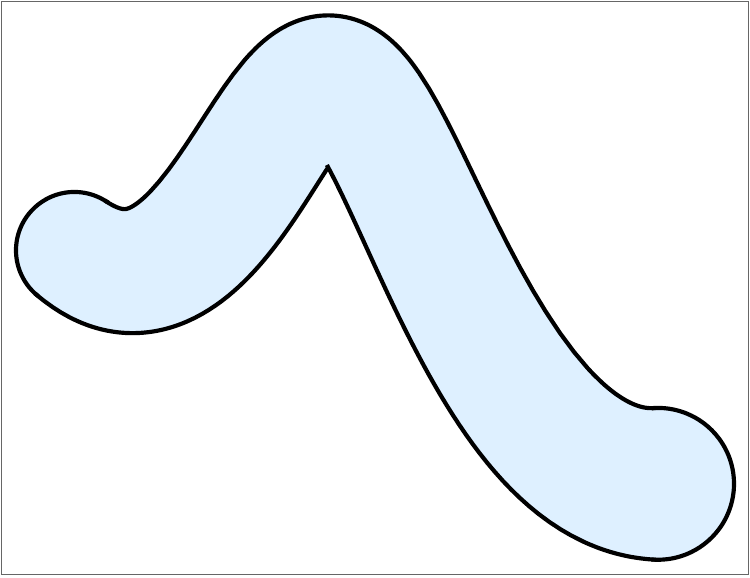}
\end{minipage}
  \caption{The cyclographic image of a curve segment in $\mathbb R^{2,1}$ is a planar domain (left). Its pre--boundary consists of the envelope curves (green and dark green) and cyclographic images of the endpoints of the curve segment (blue). The boundary of the domain is a subset of the pre--boundary (right).}
  \label{fig_mCurve}
\end{figure}

\subsection{Minkowski circles and arcs}

Two planar oriented circles $c_1$, $c_2$ are in \emph{oriented contact} if they have a common point and share a tangent vector at this point. The corresponding points $\mathbf{C}_1,\mathbf{C}_2 \in \mathbb R^{2,1}$ have zero Minkowski distance, i.e., $\| \mathbf{C}_1-\mathbf{C}_2\|_M = 0$.

A one parameter family of planar oriented circles that are all in oriented contact with two given oriented circles, defines a curve in the Minkowski space called a~\emph{Minkowski circle}. We extend this definition to more general situations. Furthermore, we review the rational parameterization for arcs of the Minkowski circles.

\medskip
We consider the complex projective extension of Minkowski space $\mathbb R^{2,1}$. For a point $\mathbf C = (x,y,r)^T \in \mathbb R^{2,1}$, we introduce its homogeneous coordinates $\tilde{\mathbf{C}}=(\bar w:\bar x:\bar y:\bar r)$.

The \textit{absolute conic} is defined as the set of points at infinity, i.e., points that satisfy the equations
\begin{align*}
    \tilde x^2+\tilde y^2-\tilde r^2 & =0 \ , \\
    \tilde w &= 0 \ .
\end{align*}
A Minkowski circle $\mathcal C$ is a non--degenerate conic that shares two points (with multiplicity) with the absolute conic. The plane in which the Minkowski circle $\mathcal C$ lies determines its affine type. If the plane is
\begin{itemize}
    \item space--like, $\mathcal C$ is an ellipse (a),
    \item time--like, $\mathcal C$ is a hyperbola with only space--like (b), or only time--like tangents (c),
    \item light--like, $\mathcal C$ is a parabola (d).
\end{itemize}

The envelope of the one parameter family of oriented circles given by the Minkowski circle $\mathcal C$ is composed of two oriented circles in the case (a) and (b), of an oriented circle and an oriented line, in the case (d), or, in the case (c), the family does not have a real envelope. Conversely, two oriented circles define a Minkowski circle of type (a) or (b). An oriented circle and an oriented line defines a Minkowski circle of type (d).

\begin{figure}[h]
\captionsetup[subfigure]{labelformat=empty}
\centering
\begin{subfigure}[b]{0.25\textwidth}
\centering
\includegraphics[width=0.95\linewidth]{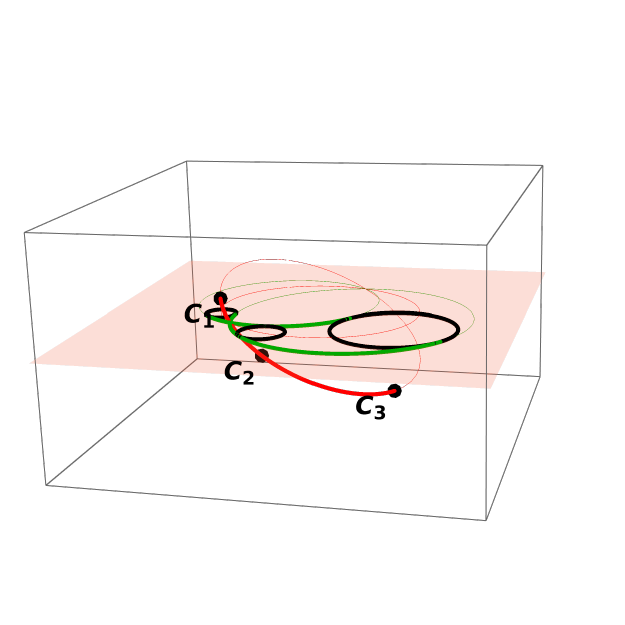}
\end{subfigure}%
\begin{subfigure}[b]{0.25\textwidth}
  \centering
  \includegraphics[width=0.95\linewidth]{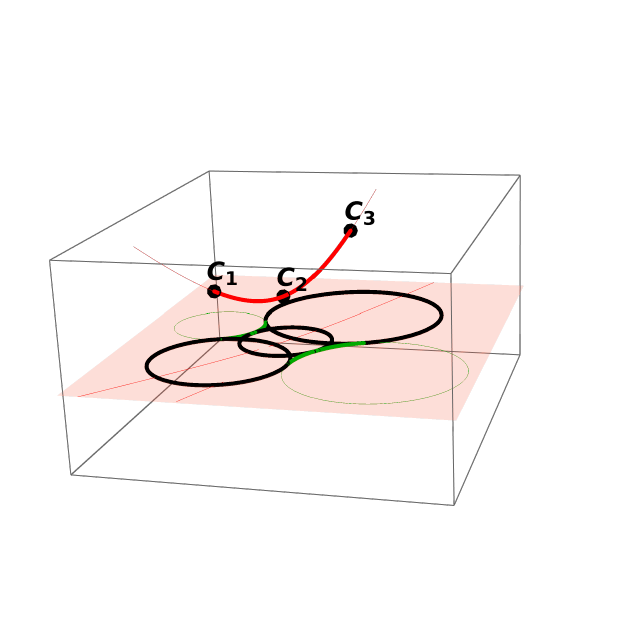}
\end{subfigure}%
\begin{subfigure}[b]{0.25\textwidth}
\centering
\includegraphics[width=0.95\linewidth]{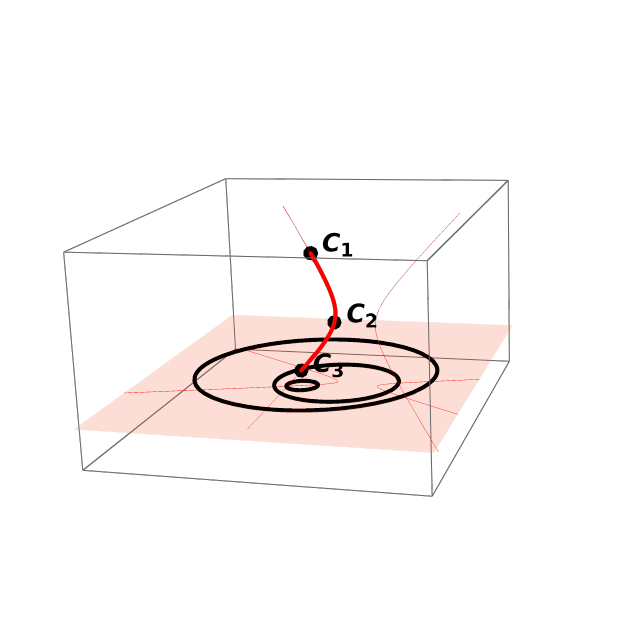}
\end{subfigure}%
\begin{subfigure}[b]{0.25\textwidth}
  \centering
  \includegraphics[width=0.95\linewidth]{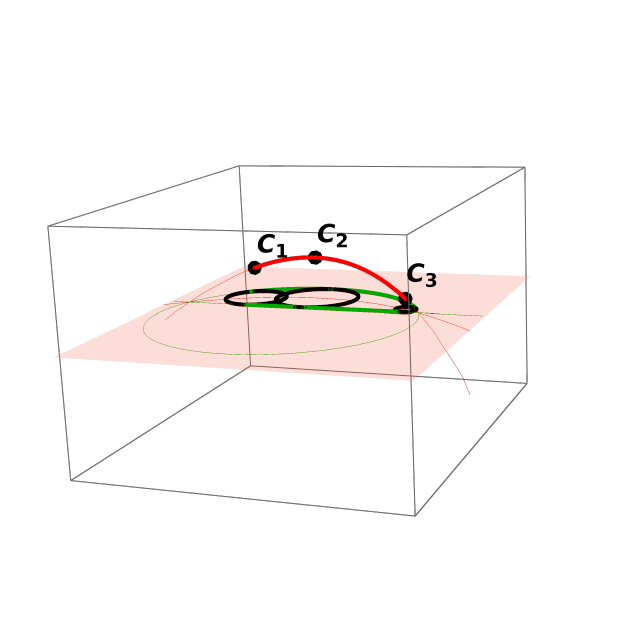}
\end{subfigure}
\begin{subfigure}[b]{0.25\textwidth}
\centering
\includegraphics[width=0.95\linewidth]{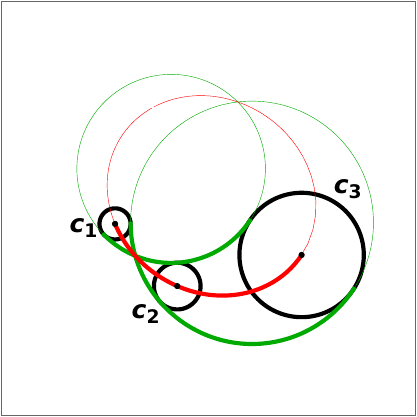}
\caption{case (a)}
\end{subfigure}%
\begin{subfigure}[b]{0.25\textwidth}
  \centering
  \includegraphics[width=0.95\linewidth]{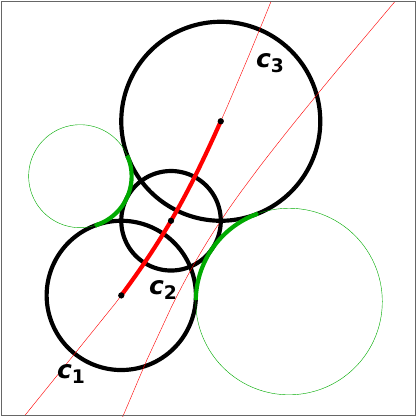}
  \caption{case (b)}
\end{subfigure}%
\begin{subfigure}[b]{0.25\textwidth}
\centering
\includegraphics[width=0.95\linewidth]{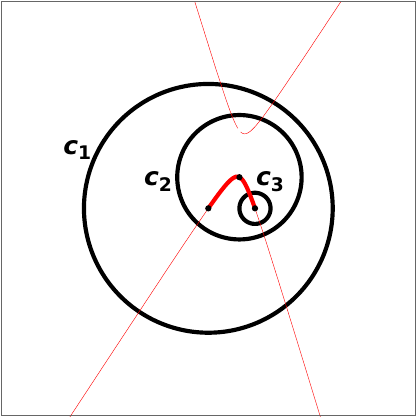}
\caption{case (c)}
\end{subfigure}%
\begin{subfigure}[b]{0.25\textwidth}
  \centering
  \includegraphics[width=0.95\linewidth]{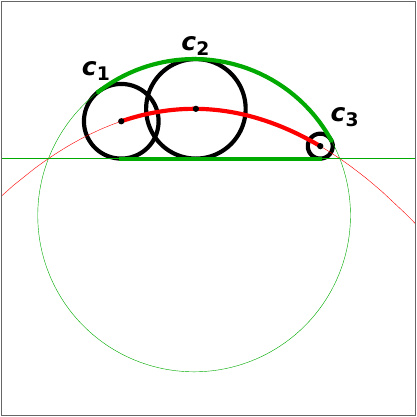}
  \caption{case (d)}
\end{subfigure}
\par\bigskip
  \caption{Affine types of Minkowski arcs and circles (red) and envelopes of the corresponding one parameter families of oriented circles (green).}
  \label{fig_minkArcs}
\end{figure}

\medskip
A Minkowski circle is uniquely determined by three distinct points  $\mathbf C_1,\mathbf C_2,\mathbf C_3\in \mathbb R^{2,1}$, which define an arc on the Minkowski circle. We propose a quadratic parameterization of the Minkowski arc in the case when the difference vectors between the three points are not light--like.

\begin{proposition}\label{prop_MinkArc}
    Let $\mathbf C_1,\mathbf C_2,\mathbf C_3 \in \mathbb R^{2,1}$ be three points in the Minkowski space whose difference vectors are not light--like. Then
\begin{equation}
    A(u) = \frac{w_1\mathbf C_1(u-1)(u-\frac 1 2)+w_2\mathbf C_2u(u-1)+w_3\mathbf C_3u(u-\frac 1 2)}{w_1(u-1)(u-\frac 1 2)+w_2u(u-1)+w_3u(u-\frac 1 2)}, \ u \in [0,1] \ ,
\end{equation}
    where
\begin{equation}
    w_1 = 2\| \mathbf C_2-\mathbf C_3\|_M \ , \quad
    w_2 = -\| \mathbf C_1-\mathbf C_3\|_M \ , \quad
    w_3 = 2\| \mathbf C_1-\mathbf C_2\|_M \ ,
\end{equation}
is the Minkowski arc interpolating $\mathbf C_1,\mathbf C_2$ and $\mathbf C_3$.
\end{proposition}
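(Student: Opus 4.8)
The plan is to verify two things: that $A$ interpolates the data at $u=0,\tfrac12,1$, and that its image is the unique Minkowski circle through $\mathbf C_1,\mathbf C_2,\mathbf C_3$. The interpolation is immediate from the node structure of the three quadratics $\ell_1(u)=(u-1)(u-\tfrac12)$, $\ell_2(u)=u(u-1)$, $\ell_3(u)=u(u-\tfrac12)$: each vanishes at two of the nodes $0,\tfrac12,1$, so at each node a single term survives in both numerator and denominator, the weight cancels, and one reads off $A(0)=\mathbf C_1$, $A(\tfrac12)=\mathbf C_2$, $A(1)=\mathbf C_3$. Writing $A$ in homogeneous coordinates as $\tilde A(u)=(D(u):\tilde x(u):\tilde y(u):\tilde r(u))$ with $D(u)=\sum_i w_i\ell_i(u)$ and spatial part $\tilde{\mathbf C}(u)=(\tilde x(u),\tilde y(u),\tilde r(u))^T=\sum_i w_i\mathbf C_i\ell_i(u)$, the functions $w_i\ell_i$ form a basis of the quadratic polynomials. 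This is exactly where the non--light--like hypothesis enters, since it guarantees $w_i\neq0$, so $\tilde A$ is a projective reparameterization of $u\mapsto(1:u:u^2)$ and its image is a conic in the projective plane spanned by $\mathbf C_1,\mathbf C_2,\mathbf C_3$.

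The substance is the second claim. A point $\tilde A(u)$ lies at infinity exactly when $D(u)=0$, and it additionally lies on the absolute conic exactly when $P(u):=\langle\tilde{\mathbf C}(u),\tilde{\mathbf C}(u)\rangle_M$ vanishes. I would therefore show that $D$ divides $P$, say $P=\Phi\,D$ with $\Phi$ quadratic; this forces $P$ to vanish at both roots of $D$ in $\mathbb P^1$, so the image conic meets the absolute conic in two points (with multiplicity --- a tangency, hence the parabola case (d), when the two coincide), and is therefore a Minkowski circle.

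To obtain the factorization I would expand $P(u)=\sum_{i,j}w_iw_j\langle\mathbf C_i,\mathbf C_j\rangle_M\,\ell_i\ell_j$ and substitute the polarization identity $\langle\mathbf C_i,\mathbf C_j\rangle_M=\tfrac12\big(\|\mathbf C_i\|_M+\|\mathbf C_j\|_M-\|\mathbf C_i-\mathbf C_j\|_M\big)$. The part built from $\|\mathbf C_i\|_M+\|\mathbf C_j\|_M$ symmetrizes into $\big(\sum_i w_i\|\mathbf C_i\|_M\ell_i\big)\,D(u)$, which is already a multiple of $D$. The remaining part, carrying the pairwise quantities $\|\mathbf C_i-\mathbf C_j\|_M$, is where the specific weights are decisive: inserting $w_1=2\|\mathbf C_2-\mathbf C_3\|_M$, $w_2=-\|\mathbf C_1-\mathbf C_3\|_M$, $w_3=2\|\mathbf C_1-\mathbf C_2\|_M$ makes every surviving term carry the common factor $\|\mathbf C_1-\mathbf C_2\|_M\|\mathbf C_1-\mathbf C_3\|_M\|\mathbf C_2-\mathbf C_3\|_M$ and collapses the bracket to $\ell_1\ell_2-2\ell_1\ell_3+\ell_2\ell_3$. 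This combination vanishes identically, since $\ell_2-2\ell_3=-u^2$ gives $\ell_1(\ell_2-2\ell_3)+\ell_2\ell_3=-u^2(u-1)(u-\tfrac12)+u^2(u-1)(u-\tfrac12)=0$, so the second part is zero and $P=\big(\sum_i w_i\|\mathbf C_i\|_M\ell_i\big)D$.

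I expect this cancellation to be the crux: it is precisely why the weights must be proportional to the quantities $\|\mathbf C_j-\mathbf C_k\|_M$ of the opposite edges, and it is the only step that is not a formal manipulation. Once $P=\Phi D$ is established, the image conic is a Minkowski circle passing through the three points; since five points --- the three data points together with the two in which the plane they span meets the absolute conic --- determine a conic uniquely, this conic is the Minkowski circle interpolating $\mathbf C_1,\mathbf C_2,\mathbf C_3$, and $A$ restricted to $[0,1]$ is the asserted Minkowski arc.
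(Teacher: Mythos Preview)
Your proof is correct and reaches the same key condition as the paper --- that the quartic $P(u)=\langle\tilde{\mathbf C}(u),\tilde{\mathbf C}(u)\rangle_M$ is divisible by the quadratic denominator $D(u)$ --- but you arrive there by a genuinely different route. The paper treats the weights as unknowns $a,b$ (normalizing $w_1=1$), writes the divisibility condition as the existence of a quadratic $q_0+q_1u+q_2u^2$ with $P-(q_0+q_1u+q_2u^2)D=0$, compares coefficients of $u^0,\dots,u^4$, eliminates $q_0,q_1,q_2$, and \emph{solves} for $a,b$. This has the advantage of deriving the weights and showing they are essentially the only non--trivial choice. Your argument instead \emph{verifies} the given weights directly: the polarization identity splits $P$ into a manifest multiple of $D$ plus a cross term, and you observe that the specific weights make every cross term carry the common factor $\|\mathbf C_1-\mathbf C_2\|_M\|\mathbf C_1-\mathbf C_3\|_M\|\mathbf C_2-\mathbf C_3\|_M$ times the polynomial identity $\ell_1\ell_2-2\ell_1\ell_3+\ell_2\ell_3=0$. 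This is more transparent about \emph{why} the opposite--edge weights work and avoids any elimination, at the cost of not establishing uniqueness (which the proposition as stated does not require).
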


\begin{proof}
 We embed the points $\mathbf C_1,\mathbf C_2,\mathbf C_3$ into the projective extension of Minkowski space $\mathbb R^{2,1}$ by
 \begin{equation*}
   \mathbf C_i = (x_i,y_i,r_i)^T \mapsto \tilde{\mathbf C}_i = (1:x_i:y_i:r_i) \ ,\quad i = 1,2,3 \ ,
 \end{equation*}
 and interpolate the points $\tilde{\mathbf C}_i, i = 1,2,3,$ by a Lagrange polynomial of degree 2 with the help of weights $a,b \in \mathbb R$, $a,b\neq 0$,
 \begin{align*}
  \tilde A(u) &= \tilde{\mathbf C}_1(u-1)(u-\frac 1 2)+a\tilde{\mathbf C}_2u(u-1)+b\tilde{\mathbf C}_3u(u-\frac 1 2) \\
  &= (\tilde a_w(u):\tilde a_x(u):\tilde a_y(u):\tilde a_r(u))\ , \quad u \in [0,1] \ . 
 \end{align*}
 
Let $\mathbb P \mathbb R$ denote the projectively extended real line. The curve $\tilde A(u),\ u \in [0,1]$, is a Minkowski arc if there are two points $\tilde A(u_1), \tilde A(u_2)$ on the curve $\tilde A(u),\ u \in \mathbb P \mathbb R$, that lie on the absolute conic, i.e., the parameters $u_1,u_2$ satisfy the equations
 \begin{align*}
     \tilde a_x^2(u)+\tilde a_y^2(u)-\tilde a_r^2(u) & =0 \ , \\
    \tilde a_w(u) &= 0\ ,
 \end{align*}   
  where the first one is of degree 4 and the second one of degree 2 in $u$. Hence there exist some $q_0,q_1,q_2 \in \mathbb R$ such that
  \begin{equation*}
     \tilde a_x^2(u) + \tilde a_y^2(u) - \tilde a_r^2(u) - (q_0+q_1u+q_2u^2)\tilde a_w(u) = 0
  \end{equation*}
  Collecting the coefficients at $u^i,\ i = 0,1,\dots,4$, produces a system of 5 equations, from which we eliminate the parameters $q_0,q_1,q_2$. From the resulting equations we compute the weights $a$ and $b$, where the only possible combination where both of them are non--zero is
  \begin{equation*}
    a = -\frac{\| \mathbf C_1-\mathbf C_3\|_M}{2\|\mathbf C_2-\mathbf C_3\|_M} = \frac{w_2}{w_1}\ , \quad b = \frac{\| \mathbf C_1-\mathbf C_2\|_M}{\|\mathbf C_2-\mathbf C_3\|_M} = \frac{w_3}{w_1} \ ,
  \end{equation*}
 which concludes the proof. \qedhere
\end{proof}

In \cite{Pottmann1998} a quadratic rational B\'ezier parameterization of the Minkowski arc with three control points $P_1,P_2,P_3$ was suggested for the cases (a) and (b), when the Minkowski arcs have only space--like tangents. We propose an equivalent formulation.

\begin{proposition}\label{prop_MinkBezierArc}
    Let $\mathbf P_1,\mathbf P_2,\mathbf P_3 \in \mathbb R^{2,1}$ be three points in the Minkowski space such that $\|P_1-P_3\|_M > 0$ and $\|P_1-P_2\|_M > 0$ and $\|P_2-P_1\|_M  = \|P_3-P_2\|_M$. Then
\begin{equation*}
 A(u) = \frac{P_1(1-u)^2+wP_2 2 u(1-u)+P_3u^2}{(1-u)^2+w 2 u(1-u)+u^2} \ ,
\end{equation*}
where
\begin{equation*}
 w = \frac{\sqrt{\|P_1-P_3\|_M}}{2\sqrt{\|P_1-P_2\|_M}} \ ,
\end{equation*}
is the Minkowski arc satisfying
\begin{equation}\label{eq_proof21}
 A(0)= P_1\ , \quad A(1) = P_3 \ ,
\end{equation}
and
\begin{equation}\label{eq_proof22}
 \frac{A'(0)}{\|A'(0)\|} = \frac{P_2-P_1}{\|P_2-P_1\|_M} \ , \quad
 \frac{A'(1)}{\|A'(1)\|} = \frac{P_3-P_2}{\|P_3-P_2\|_M} \ .
\end{equation}
\end{proposition}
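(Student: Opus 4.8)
The plan is to verify the three asserted properties in turn, treating the endpoint interpolation and the tangent conditions as routine consequences of the rational B\'ezier form, and reserving the real work for the claim that $A(u)$ is a Minkowski arc. For the interpolation \eqref{eq_proof21} I would simply evaluate the quadratic rational form at $u=0$ and $u=1$: the Bernstein basis collapses so that $A(0)=P_1$ and $A(1)=P_3$ immediately. For the tangent conditions \eqref{eq_proof22} I would differentiate the rational curve $A=N/g$ with numerator $N(u)=(1-u)^2P_1 + 2wu(1-u)P_2 + u^2 P_3$ and denominator $g(u)=(1-u)^2 + 2wu(1-u) + u^2$. A short calculation gives $A'(0)=2w(P_2-P_1)$ and $A'(1)=2w(P_3-P_2)$; since $w>0$ under the hypotheses, these are positive multiples of $P_2-P_1$ and $P_3-P_2$, which yields the stated equalities of normalized tangent directions.

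The substantive step is to show that $A(u)$ is a Minkowski arc, i.e.\ (following the characterization and the projective argument used in Proposition \ref{prop_MinkArc}) that the conic carrying $A(u)$ meets the absolute conic in two points. Embedding in the projective extension with homogeneous weight coordinate $g(u)$, the two points at infinity of the conic occur at the roots of the degree-two denominator $g$, and the Minkowski-circle condition is exactly that these roots also annihilate the degree-four polynomial $f(u):=\langle N(u),N(u)\rangle_M$. Hence it suffices to prove the divisibility $g\mid f$. I would verify this by evaluating $f$ at a root $u_*$ of $g$. Writing $B_0=(1-u)^2$, $B_1=2u(1-u)$, $B_2=u^2$ for the Bernstein polynomials and expanding $f$ in the pairwise inner products via $\langle P_i,P_j\rangle_M = \tfrac12(\langle P_i,P_i\rangle_M + \langle P_j,P_j\rangle_M - \|P_i-P_j\|_M)$, the coefficient of each self-product $\langle P_i,P_i\rangle_M$ factors as (a Bernstein polynomial)$\,\times\,g(u_*)$ and therefore vanishes. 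This cancellation is the key structural observation: it eliminates the unconstrained quantities $\langle P_i,P_i\rangle_M$ and leaves $f(u_*)$ expressed purely through the Minkowski distances $\|P_1-P_2\|_M$, $\|P_2-P_3\|_M$, $\|P_1-P_3\|_M$.

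It then remains to insert the data $g(u_*)=0$ (equivalently $2u_*(1-u_*)=1/(1-w)$, from which $B_0+B_2$ and $B_0B_2$ at $u_*$ are read off), the symmetry hypothesis $\|P_1-P_2\|_M=\|P_2-P_3\|_M$, and the weight $w^2=\|P_1-P_3\|_M/(4\|P_1-P_2\|_M)$; these make the remaining expression collapse to $0$. Because $g$ is symmetric about $u=\tfrac12$, its two roots are $u_*$ and $1-u_*$, at which $B_0+B_2$ and $B_0B_2$ agree, so the same computation covers both; thus $g\mid f$ and $A(u)$ lies on a Minkowski circle. The main obstacle is organizing this evaluation so that the self-product cancellation is visible and the value of $w$ is seen to be forced; once that is arranged, the hypotheses $\|P_1-P_3\|_M,\|P_1-P_2\|_M>0$ guarantee that $w$ is real and positive (so the B\'ezier form is well defined) and, together with the non-degeneracy of the control triangle (non-collinear $P_1,P_2,P_3$), ensure the conic is non-degenerate, completing the identification of $A(u)$ as the desired Minkowski arc.
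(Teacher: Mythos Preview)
Your argument is correct. Both you and the paper reduce the ``Minkowski arc'' claim to the same projective characterization: the conic must meet the absolute conic in two points, equivalently the quartic $f(u)=\langle N(u),N(u)\rangle_M$ must be divisible by the quadratic weight polynomial $g(u)$. The paper establishes this divisibility exactly as in the proof of Proposition~\ref{prop_MinkArc}, namely by writing $f-(q_0+q_1u+q_2u^2)g=0$ with undetermined $q_i$, collecting the five coefficient equations, eliminating $q_0,q_1,q_2$, and solving the remaining two equations for the weight. You instead evaluate $f$ at the roots of $g$ and use the polarization identity so that the $\langle P_i,P_i\rangle_M$ contributions visibly factor as $B_i\cdot g$ and drop out, leaving a single relation in the Minkowski distances that forces $w^2=\|P_1-P_3\|_M/(4\|P_1-P_2\|_M)$. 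This is a cleaner and more structural execution of the same idea; it bypasses the $5\times 5$ coefficient system and makes transparent why only the three pairwise Minkowski distances matter. The paper's route has the minor advantage of being a mechanical derivation that does not require spotting the $B_i\cdot g$ factorization, but your approach is shorter and more illuminating. The endpoint and tangent verifications are handled identically in both.
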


\begin{proof}
 Following the same procedure as in the previous proof, we obtain the weight $w$. The weight is real and non--zero only if $\|P_1-P_3\|_M > 0$ and $\|P_1-P_2\|_M > 0$. The equations \eqref{eq_proof21} and \eqref{eq_proof22} follow from the properties of quadratic rational B\'ezier curves.
\end{proof}

\medskip
Next we consider biarcs in the Minkowski space. Given two points $P_1,P_2\in \mathbb R^{2,1}$ and two unit space--like vectors $\mathbf t_1, \mathbf t_2$,
\begin{equation*}
 \|\mathbf t_1\|_M = \|\mathbf t_2\|_M = 1 \ ,
\end{equation*}
we can construct a Minkowski biarc interpolating the data $(P_1,\mathbf t_1)$ and $(P_2,\mathbf t_2)$.
The biarc consists of two Minkowski arcs $A_j^1(u)$ and $A_j^2(u),\ u\in [0,1],$ that satisfies
\begin{align*}
  A^1(0) &= P_1 \ , \quad A^2(1) = P_2\ , \nonumber \\
  \frac{A^{1'}(0)}{\|A^{1'}(0)\|_M} &= \mathbf t_1\ , \quad \frac{A^{2'}(1)}{\|A^{2'}(1)\|_M} = \mathbf t_2\ ,
\end{align*}
and are joined continuously with continuous tangents,
\begin{equation*}
   A^1(1) = A^2(0) \quad \text{and} \quad
   \frac{A^{1'}(1)}{\|A^{1'}(1)\|_M} = \frac{A^{2'}(0)}{\|A^{2'}(0)\|_M} \ .
\end{equation*}

We compute the Minkowski biarc following the construction from \cite{Pottmann1998}. The biarc is given by 5 B\'ezier control points
 \begin{equation}\label{eq_bezierCP}
   \left(P_1,B_1,J,B_2,P_2\right)\ ,
 \end{equation}
 where
 \begin{equation*}
  B_1 = P_1 + \lambda_1 \mathbf t_1\ , \quad B_2 = P_2 - \lambda_2 \mathbf t_2\ ,\quad \lambda_1,\lambda_2 \in \mathbb R,\ \lambda_1+\lambda_2 \neq 0 \ ,
 \end{equation*}
  such that
  \begin{equation}\label{eq_MinkBiArc}
   \langle B_2 - B_1, B_2 - B_1 \rangle_M = (\lambda_1 + \lambda_2)^2,
  \end{equation}
  i.e., the points $P_1,B_1,J$ and $J,B_2,P_2$ form isosceles triangles (with respect to the Minkowski inner product). The joint point $J$ of the two Minkowski arcs is given by
  \begin{equation*}
   J = \frac{\lambda_2B_1 + \lambda_1B_2}{\lambda_1+\lambda_2} \ .
  \end{equation*}
  Choosing the parameter $\lambda_1$, equation \eqref{eq_MinkBiArc} gives a unique parameter $\lambda_2$. If $\lambda_1,\lambda_2 >0$, the joint point $J$ lies between $B_1$ and $B_2$ and the Minkowski arcs lie inside of the triangles $P_1,B_1,J$ and $J,B_2,P_2$. 
 
  Furthermore, equation \eqref{eq_MinkBiArc} ensures that the difference vectors $J-B_1$ and $B_2 - J$ are space--like for any $\lambda_1 \neq 0$:
  \begin{equation*}
    J - B_1 = \frac{\lambda_1(B_2-B1)}{\lambda_1 + \lambda_2} \ ,
  \end{equation*}
  and hence 
  \begin{equation*}
    \| J - B_1 \|_M = \frac{\lambda_1^2\|B_2-B_1\|_M}{(\lambda_1 +\lambda_2)^2} = \lambda_1^2 \ ,
  \end{equation*}
  and similarly, $\| B_2 - J \|_M = \lambda_2^2.$ Therefore, we can use Proposition \ref{prop_MinkBezierArc} to parameterize the Minkowski biarc and its cyclographic image has a real envelope.
  
In this paper, we chose $\lambda_1 > 0$ such that
\begin{equation*}
    \| P_1 - J\|_M = \| P_2 - J\|_M \ ,
\end{equation*}
following the ‘equal chord’ biarc interpolation scheme for planar data \cite{Meek1995}.  

Minkowski arc splines are defined to be curves in Minkowski space $\mathbb R^{2,1}$ composed of arcs of Minkowski circles.

\medskip
Let $C(u) = (x(u),y(u),r(u))^T,\ u \in [0, 1],$ be a Minkowski arc with only space--like (or isolated light--like) tangents. We note that the envelope of the one parameter family of circles defined by $C(u)$ is formed by circular arcs or line segments. Two arcs are parts of~the cyclographic images of the endpoints of the Minkowski arc. The other two arcs can be constructed as follows: 
\begin{enumerate}
    \item Let $\boldsymbol\tau(u)$ be unit tangent vector at $C(u)$,
    \begin{equation*}
        \boldsymbol\tau(u) = \frac{C'(u)}{\sqrt{\|C'(u)\|_M}} \ .
    \end{equation*}
    \item For $u^* \in \{0,1\}$, we find the two light--like planes $\rho^{\pm}_{u^*}$ such that
    \begin{itemize}
        \item $C(u^*) \in \rho^{\pm}_{u^*}$, and
        \item $\boldsymbol\tau(u^*) \in \rho^{\pm}_{u^*}$.
    \end{itemize}
    \item For $u^* \in \{0,1\}$, we find the lines $\ell^{\pm}_{u^*}$ as the intersection of the planes $\rho^{\pm}_{u^*}$ and the plane $r = 0$,
    \begin{equation}\label{eq_tangentLineEnv}
        \ell^{\pm}_{u^*}(v) = \f q^{\pm}_{u^*} + \f t^{\pm}_{u^*}v \ , \quad v \in \mathbb R \ ,
    \end{equation}
    such that $\f q^{\pm}_{u^*} \in \mathbb R$, $\|\f t^{\pm}_{u^*}\| = 1$, and the lines $\ell^{\pm}_{u^*}(v)$ are in oriented contact with the circle $\zeta(C(u^*))$, i.e., their unit tangent vectors coincide at the common points.
    \item For $u^* \in \{0,1\}$, we compute the points
    \begin{equation}\label{eq_envPts}
        \f p^{\pm}_{u^*} = (x(u^*),y(u^*))^T + r(u^*)(\f t^{\pm}_{u^*})^{\perp} \ ,
    \end{equation}
    where $(\f t^{\pm}_{u^*})^{\perp}$ denotes the vector $(\f t^{\pm}_{u^*})$ rotated by $\frac{\pi}{2}$ counterclockwise.
    
    Those are exactly the envelope points $e^{\pm}_C(u^*)$ from equation \eqref{eq_envPar}.
    \item The rotations mapping the pairs $(\f p^{\pm}_{0}, \f t^{\pm}_{0})$ to the pairs
    $(\f p^{\pm}_{1}, \f t^{\pm}_{1})$ define the envelope arcs. 
    
\end{enumerate}

\subsection{Medial axis transform}
We recall the medial axis and the medial axis transform of an open set  $\Omega \subset \mathbb R^2$.
\begin{definition}
  Let $\Omega$ be an open set in $\mathbb R^2$ and $\delta \Omega$ its boundary. A closed disk $D \subset \mathbb R^2$ (possibly with a zero radius) is a \emph{maximal disk} inscribed in $\Omega$, if
\begin{itemize}
 \item $D \subseteq \Omega$, and if
 \item for any disk $D'$ satisfying $D \subseteq D'$ and $D' \subseteq \Omega$ it holds $D' = D$.
\end{itemize}
The boundary $\delta D$ of a maximal disk is called a \emph{maximal circle}.
\end{definition}

In $\mathbb R^2$ with the Euclidean metric, every maximal disk intersects $\delta\Omega$ in at least two points or it has a third order contact at points with stationary curvature.

\begin{definition}
  Let $\Omega$ be an open set in $\mathbb R^2$. The \emph{medial axis} of $\Omega$ is the union of the centres of all maximal disks inscribed in $\Omega$,
  \begin{equation*}
      \MA(\Omega) = \bigcup_{\substack{D\\ \text{maximal disk}}} \{(x_0,y_0)^T:\delta D\ \text{has the equation}\ (x-x_0)^2 + (y-y_0)^2 = r_0^2\} \ .
  \end{equation*}
\end{definition}

The medial axis transform $\MAT(\Omega)$ of the set $\Omega$ also contains the information about radii of the maximal disks.

\begin{definition}
  Let $\Omega$ be an open set in $\mathbb R^2$. The \emph{medial axis transform} of $\Omega$ is defined as
  \begin{equation*}
      \MAT(\Omega) = \bigcup_{\substack{D\\ \text{maximal disk}}} \{(x_0,y_0,r_0)^T: \delta D \ \text{has the equation}\ (x-x_0)^2 + (y-y_0)^2 = r_0^2\} \ .
  \end{equation*}
\end{definition}

We interpret the branches of $\MAT(\Omega)$ as curves $B_1(u),\dots,B_n(u)$, $n \in \mathbb N$, in Minkowski space $\mathbb R^{2,1}$ parameterized over a closed interval $I \subset \mathbb R$. The domain $\Omega$ is the union of the cyclographic images of the curves $B_i(u)$,
\begin{equation*}
    \Omega = \bigcup_{i=1}^{n} \zeta(B_i(u)) \ ,
\end{equation*}
and its boundary is a subset of the boundaries of the domains $\zeta(B_i(u))$,
\begin{equation*}
    \delta \Omega \subseteq \bigcup_{i=1}^{n} \delta\zeta(B_i(u)) \ .
\end{equation*}
Note that if the branches $B_1(u),\dots,B_n(u)$ of the MAT are Minkowski arcs, then $\Omega$ is an arc spline domain, i.e., a domain whose boundary is composed of circular arcs and line segments.

\section{Envelopes of evolving worms}\label{sec_evolvingworms}
In this section, we introduce worms and evolving worms. We characterize the supersets of the envelopes of the evolving worms as the boundaries of cyclographic images of certain curves in the Minkowski space.

\subsection{Evolving worms}

First, we define worms with the help of the cyclographic mapping:
\begin{definition}\label{def_worm}
A planar domain $\Omega \subset \mathbb R^2$ is called a \emph{worm} if it is the cyclographic image of a curve segment $B(u): I \rightarrow \mathbb R^{2,1}$, where $I \subset \mathbb R$ is a closed interval,
\begin{equation*}
    \Omega = \zeta(B(u))\ .
 \end{equation*}
\end{definition}

The boundary $\delta \Omega$ is the boundary of the cyclographic image of the curve $B(u)$. If $I = [a,b]$, then it is composed of (parts of) the curves $e^{\pm}_{B}(u),\ u \in [a,b]$, and of arcs of the cyclographic images of points $B(a)$ and $B(b)$, see Figure \ref{fig_worm}.

\begin{remark}
  By Definition \ref{def_worm}, any planar domain $\Omega$ can be considered as a worm. We can define $\Omega$ to be a \emph{strong worm} if it holds
  \begin{equation*}
    \MAT(\Omega) = B(u) \ , \quad u \in I \ ,
  \end{equation*}
  where $B(u): I \rightarrow \mathbb R^{2,1}$ and $I \subset \mathbb R$ is a closed interval.
\end{remark}

\begin{figure}[t]
\centering
\begin{minipage}{0.5\textwidth}
\centering
\includegraphics[width=0.95\linewidth]{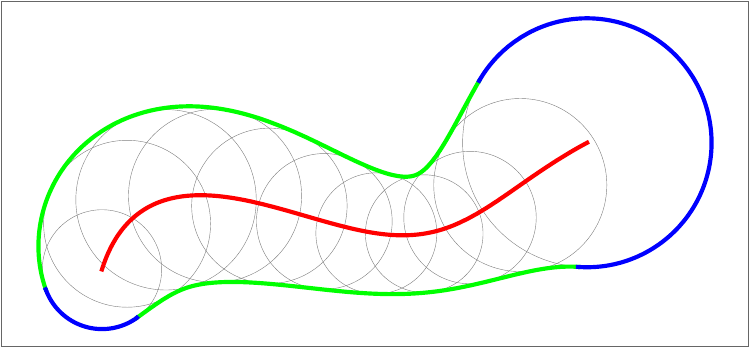}
\end{minipage}%
\begin{minipage}{0.5\textwidth}
  \centering
  \includegraphics[width=0.95\linewidth]{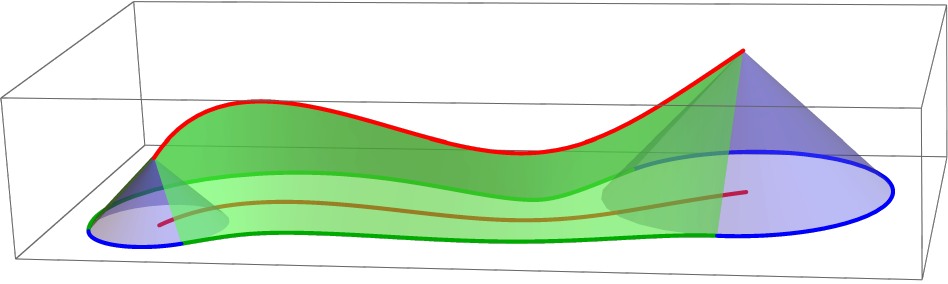}
\end{minipage}
  \caption{A worm is a cyclographic image of a curve segment in Minkowski space $\mathbb R^{2,1}$.}
  \label{fig_worm}
\end{figure}

When a worm $\Omega$ evolves over time $t$ and for each value of the time parameter its MAT is a single curve in $\mathbb R^{2,1}$, we call it an evolving worm.

\begin{definition}
  An \emph{evolving worm} is a one--parameter family of planar domains $\Omega(t),\ t \in I_t,$ such that for every $t_0 \in I_t$ the domain $\Omega(t_0)$ is a worm, i.e.,
    \begin{equation*}
        \Omega(t_0) = \zeta(B_{t_0}(u)) = \zeta(B(u,t_0))\ ,
    \end{equation*}
  where $I_u,I_t \subset \mathbb R$ are closed intervals.
\end{definition}

The union of the preimages $B_{t}(u)$ of the evolving worms $\Omega(t)$ over the time parameters $t \in I_t$ is a surface in Minkowski space $\mathbb R^{2,1}$,
\begin{equation*}
  \bigcup_{t \in I_t} B_t(u) = \{B(u,t), \ u \in I_u, \ t \in I_t \}\ .
\end{equation*}

\begin{remark}
    We can define a \emph{strong evolving worm} to be a one--parameter family of planar domains $\Omega(t),\ t \in I_t,$ such that for every $t_0 \in I_t$ the domain $\Omega(t_0)$ is a strong worm, i.e.,
 \begin{equation*}
   \MAT(\Omega(t_0)) = \{B_{t_0}(u)\ , \  u \in I_u\} = \{B(u,t_0) \ , \  u \in I_u\} \ .
  \end{equation*}
    The union of the medial axis transforms of the strong evolving worms is again a surface in Minkowski space $\mathbb R^{2,1}$,
\begin{equation*}
  \bigcup_{t \in I_t} \MAT(\Omega(t)) = \{B(u,t), \ u \in I_u, \ t \in I_t \}\ .
\end{equation*} 

\end{remark}

The envelope of the evolving worm $\Omega(t),\ t \in [0,1],$ is the envelope of a two--parameter family of oriented planar circles. If  
\begin{equation*}
    B(u,t)=(x(u,t),y(u,t),r(u,t))^T,\ u,t \in [0,1] \ ,
\end{equation*}
the circles are centred at $(x(u,t),y(u,t))^T$ and their radii are given by the function $r(u,t)$.

\subsection{Envelope characterization}
We characterize the envelope of an evolving worm. Let $\Omega(t),\ t \in [0,1]$, be an evolving worm represented by a smooth surface $B(u,t)$ in Minkowski space $\mathbb R^{2,1}$.
\begin{equation*}
 B(u,t) = (x(u,t),y(u,t),r(u,t))^T, \quad u,t \in [0,1] \ .
\end{equation*}

We identify \textit{singular} curves on the surface $B(u,t)$. The cyclographic images of the singular curves contribute to the envelope of the evolving worm.

\begin{definition}\label{def_singPt}
 Let $B(u,t),\ u,t \in [0,1],$ be a surface in Minkowski space $\mathbb R^{2,1}$. A point $B(u_0,t_0),$ $u_0,t_0~\in~[0,1],$ is a \emph{singular point} of $B(u,t),$ if
 the vectors
 \begin{equation*}
 \mathbf v_u(u_0,t_0) = \left.\frac{\partial B(u,t)}{\partial u} \right|_{u=u_0,t=t_0} \quad \text{and} \quad
 \mathbf v_t(u_0,t_0) = \left.\frac{\partial B(u,t)}{\partial t} \right|_{u=u_0,t=t_0} 
\end{equation*}
 are either linearly dependent or if they span a light--like plane.
\end{definition}

Let $u_0,t_0 \in [0,1]$. The restriction of the quadratic form defined by $G = \text{diag}(1,1,-1)$ on the tangent plane to $B(u,t)$ at the point $B(u_0,t_0)$ is defined by the $2\times 2$ matrix
\begin{equation*}
    M(u_0,t_0) = \left(\mathbf v_u,\mathbf v_t\right)^T G \left(\mathbf v_u,\mathbf v_t \right) \ ,
\end{equation*}
where $\mathbf v_u = \mathbf v_u(u_0,t_0)$ and $\mathbf v_t = \mathbf v_t(u_0,t_0)$.

The tangent plane is light--like if the quadratic form $M(u_0,t_0)$ is indefinite degenerate, that is, 
\begin{equation}\label{eq_det}
    \text{det}(M(u_0,t_0)) = 0 \ .
\end{equation}
Equation \eqref{eq_det} identifies the points of the surface $B(u,t)$ with light--like tangent planes as well as the second type of the singular points from Definition \ref{def_singPt}, i.e., the points where the tangent plane is not defined.

If $B(u,t)$ is piecewise rational surface, Equation \eqref{eq_det} defines an algebraic curve in the parameter domain. Points $B(u_0,t_0)$ for $u_0,t_0 \in [0,1],$ that satisfy Equation \eqref{eq_det} then form the \emph{singular curves} on the surface $B(u,t)$.

The envelope of the evolving worm $\Omega(t)$ then consists of boundaries of the cyclographic images of the singular curves and the \textit{boundary curves} of the surface $B(u,t)$. This fact was noted in \cite{Peternell2009}. Here we present a formal proof.

\begin{figure}[t]
\centering
\begin{minipage}{0.55\textwidth}
\centering
\includegraphics[width=0.95\linewidth]{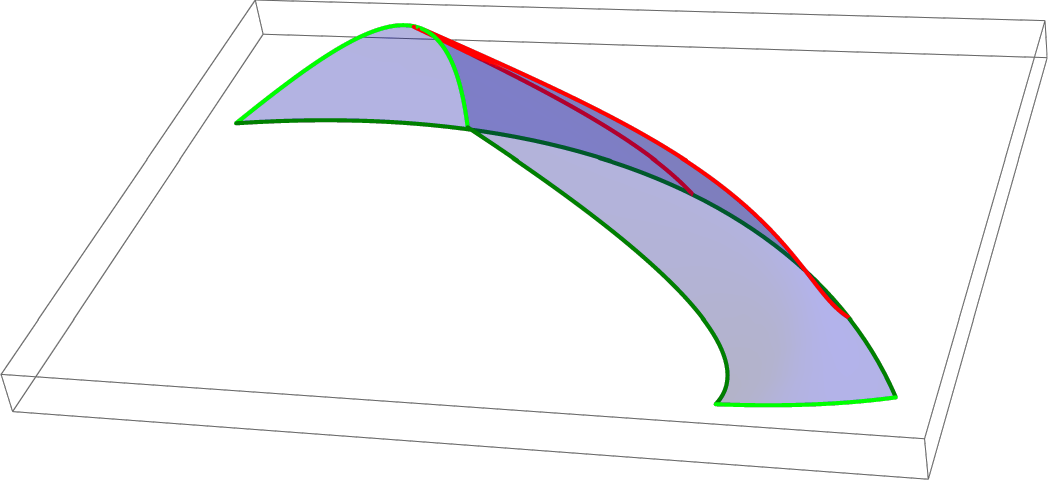}
\end{minipage}%
\begin{minipage}{0.45\textwidth}
  \centering
  \includegraphics[width=0.95\linewidth]{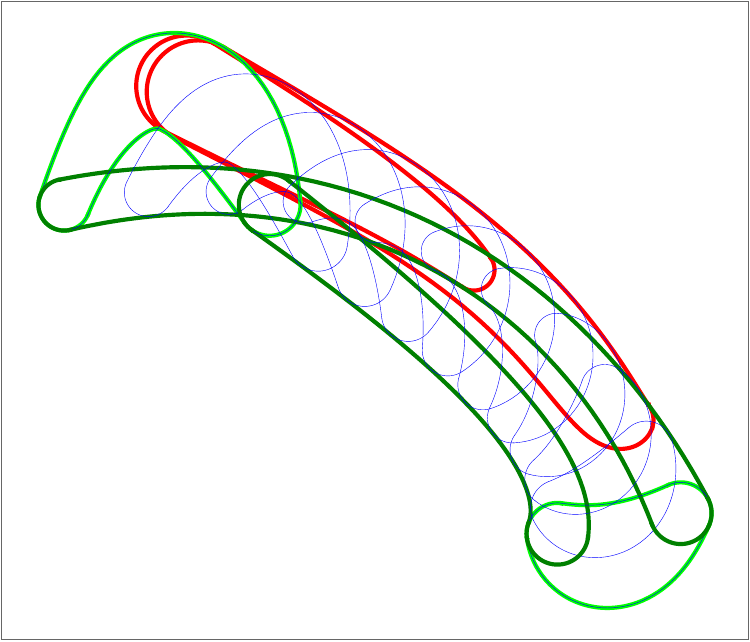}
\end{minipage}
  \caption{The singular (red) and the boundary (dark and light green) curves on a surface in the Minkowski space (left) and their cyclographic images (right).}
\label{fig_singCurves}
\end{figure}

\begin{theorem}\label{thm_singularCurves}
 The envelope of an evolving worm $\Omega(t),\ t \in [0,1],$  given by the surface
 \begin{equation*}
   B(u,t):[0,1]^2 \rightarrow \mathbb R^{2,1}\ ,
 \end{equation*}
  is a subset of the cyclographic images of the boundary curves
 \begin{align*}
  & B(u,0),\ B(u,1),\quad u \in [0,1] \ ,\\
  & B(0,t),\ B(1,t),\quad t \in [0,1] \ ,
 \end{align*}
 and of the cyclographic images of the singular curves.
\end{theorem}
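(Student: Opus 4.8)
The plan is to characterize the envelope as the topological boundary of the swept region and then translate the first-order optimality conditions at such a boundary point into Minkowski-orthogonality relations, from which the dichotomy between boundary curves and singular curves will emerge.

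First I would fix a candidate envelope point $P=(X,Y)^T\in\mathbb R^2$ and introduce the signed membership function
\begin{equation*}
  g(u,t)=r(u,t)^2-(X-x(u,t))^2-(Y-y(u,t))^2,\qquad (u,t)\in[0,1]^2,
\end{equation*}
so that $P$ lies in the open disk bounded by the circle $c(u,t)=\zeta(B(u,t))$ exactly when $g(u,t)>0$, and on the circle $c(u,t)$ exactly when $g(u,t)=0$. Writing $R$ for the union over $(u,t)\in[0,1]^2$ of the closed disks bounded by the $c(u,t)$, compactness of $[0,1]^2$ and continuity of $B$ show that $R$ is closed, while any point with $g(u_0,t_0)>0$ lies in an open disk contained in $R$ and is therefore interior to $R$. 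Hence every boundary point $P\in\partial R$, that is, every point of the envelope, satisfies $\max_{(u,t)}g(u,t)=0$, and the maximum is attained at some $(u_0,t_0)\in[0,1]^2$ by compactness. The decisive observation, which I would verify by a direct differentiation, is that upon setting $\delta=(X-x,Y-y,-r)^T$ evaluated at $(u_0,t_0)$ one has $\|\delta\|_M=-g(u_0,t_0)=0$ together with $\tfrac12 g_u=\langle\delta,\mathbf v_u\rangle_M$ and $\tfrac12 g_t=\langle\delta,\mathbf v_t\rangle_M$, where $\mathbf v_u,\mathbf v_t$ are the partial derivatives of $B$ from Definition \ref{def_singPt}.

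Next I would split the argument according to the location of the maximizer. If $(u_0,t_0)$ lies on the boundary of the parameter square, then $P$ lies on one of the circles $c(0,t_0)$, $c(1,t_0)$, $c(u_0,0)$, $c(u_0,1)$ and therefore belongs to the cyclographic image of the corresponding boundary curve $B(0,t)$, $B(1,t)$, $B(u,0)$ or $B(u,1)$, as claimed. If instead $(u_0,t_0)$ is interior, it is an unconstrained maximum of $g$, so $g_u=g_t=0$ there; by the identities above this means $\langle\delta,\mathbf v_u\rangle_M=\langle\delta,\mathbf v_t\rangle_M=0$, i.e. the light-like vector $\delta$ is Minkowski-orthogonal to the tangent plane of $B$ at $(u_0,t_0)$.

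Finally I would exploit the signature of $G$. If $\mathbf v_u,\mathbf v_t$ are linearly dependent, then $(u_0,t_0)$ is already a singular point by Definition \ref{def_singPt}; otherwise they span a genuine tangent plane $T$, whose Minkowski-orthogonal complement is the one-dimensional line spanned by a Minkowski normal $\mathbf n$. Because $G$ has signature $(2,1)$, the normal $\mathbf n$ is time-like when $T$ is space-like and space-like when $T$ is time-like, and is light-like precisely when $T$ is light-like. Since $\delta\in\operatorname{span}(\mathbf n)$ is light-like, writing $\delta=\mu\mathbf n$ yields $0=\|\delta\|_M=\mu^2\|\mathbf n\|_M$, forcing either $\delta=0$ or $\|\mathbf n\|_M=0$. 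The latter says that $T$ is light-like, i.e. $\det M(u_0,t_0)=0$, cf. \eqref{eq_det}, so $(u_0,t_0)$ lies on a singular curve and $P\in c(u_0,t_0)$ belongs to its cyclographic image. The remaining case $\delta=0$ forces $r(u_0,t_0)=0$ and $P$ to coincide with a zero-radius circle; this is excluded once the radii stay positive, and otherwise contributes only degenerate limiting points. The main obstacle I anticipate is exactly this orthogonality/signature step: one must argue cleanly that a light-like vector Minkowski-orthogonal to a non-degenerate tangent plane must vanish, which is where the indefinite metric is essential and where the degenerate configurations ($\delta=0$, or linearly dependent $\mathbf v_u,\mathbf v_t$) have to be isolated and absorbed into Definition \ref{def_singPt}; by comparison, the closedness of $R$ and the reduction of $\partial R$ to $\max g=0$ are routine.
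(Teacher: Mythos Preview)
Your proof is correct and takes a genuinely different route from the paper's. The paper parameterizes the two-parameter family of circles by an angle, $p(u,t,\varphi)=(x,y)^T+r(\cos\varphi,\sin\varphi)^T$, imposes the classical envelope condition $\operatorname{rank}(\partial_u p,\partial_t p,\partial_\varphi p)<2$ at interior parameters, and then eliminates $\varphi$ from the two resulting $2\times 2$ determinant equations; the elimination produces exactly $-r\cdot\det M(u,t)=0$, i.e.\ \eqref{eq_det} up to the zero-radius factor. You instead cast membership on $\partial R$ as the constrained maximum $\max g=0$, read the first-order conditions as Minkowski orthogonality of the light-like contact vector $\delta$ to the tangent plane of $B$, and use the signature of $G$ to force that plane to be light-like whenever $\delta\neq 0$. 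The paper's argument is shorter and delivers \eqref{eq_det} by a direct computation, with the degenerate case $r=0$ appearing as a separate factor; your argument is more geometric, explains intrinsically why the Minkowski metric governs the problem, and treats boundary curves and singular curves uniformly as constrained versus unconstrained maxima. One small caveat: you identify the envelope with the topological boundary $\partial R$, which is what the applications in the paper actually need, whereas the paper's rank condition covers the (generally larger) classical envelope including interior self-overlaps; both readings validate the theorem as stated.
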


\begin{proof}
 We denote $B(u,t) = (x(u,t),y(u,t),r(u,t)),\ u,t \in [0,1]$. The surface defines a two parameter family of planar circles $p$, which can be parameterized with the help of three parameters $u,t \in [0,1]$ and $\varphi \in [0,2\pi)$,
 \begin{align*}
    p(u,t,\varphi)=\left(\begin{array}{c}
         x(u,t)\\y(u,t) \end{array}\right)+r(u,t)\left(\begin{array}{c}
         \cos{\varphi}\\ \sin{\varphi} \end{array}\right) \ .
 \end{align*}
The boundary curves correspond to the families of circles
\begin{align*}
  & p(u,0,\varphi),\ p(u,1,\varphi),\quad u \in [0,1],\ \varphi \in [0,2\pi)\ ,\\
  & p(0,t,\varphi),\ p(1,t,\varphi),\quad t \in [0,1],\ \varphi \in [0,2\pi)\ ,
 \end{align*}
and the cyclographic images of the boundary curves are envelopes of these families of circles, that clearly correspond to the envelope of $\Omega(t)$.

Let $u,t\in (0,1)$. Any point of the envelope of the two parameter family of circles $p(u,t,\varphi)$ satisfies the envelope condition (see e.g. \cite{Peternell2009}):
\begin{equation*}
  \text{rank}\left(
  \frac{\partial p}{\partial u} ,
  \frac{\partial p}{\partial t} ,
  \frac{\partial p}{\partial \varphi}
  \right) < 2 \ .
\end{equation*}
The Jacobian matrix is rank deficient iff
\begin{equation*}
  \text{det}\left(
  \frac{\partial p}{\partial u} ,
    \frac{\partial p}{\partial \varphi}
  \right) =0 \quad \text{and } \quad  \text{det}\left(
  \frac{\partial p}{\partial t} ,
    \frac{\partial p}{\partial \varphi}
  \right) =0 \ ,
\end{equation*}
i.e., if the following equations hold
\begin{align*}
 r x_u \cos{\varphi}
 + r r_u \cos^2{\varphi}
 + r y_u \sin{\varphi}
 + r r_u \sin^2{\varphi} &= 0 \ ,\\
  r x_t \cos{\varphi}
 + r r_t \cos^2{\varphi}
 + r y_t \sin{\varphi}
 + r r_t \sin^2{\varphi} &= 0 \ ,
\end{align*}
where $r=r(u,t)$ and
\begin{align*}
 x_u &= \frac{\partial x(u,t)}{\partial u},\quad y_u = \frac{\partial y(u,t)}{\partial u},\quad r_u = \frac{\partial r(u,t)}{\partial u}\ , \\
 x_t &= \frac{\partial x(u,t)}{\partial t},\quad y_t = \frac{\partial y(u,t)}{\partial t},\quad r_t = \frac{\partial r(u,t)}{\partial t} \ .
\end{align*}

Eliminating the parameter $\varphi$ from the equations above yields the equation
\begin{equation}
 -r \left(
 (x_u y_t - x_t y_u)^2 - (x_u r_t - x_t r_u)^2 - (y_u r_t - y_t r_u)^2
 \right)
\end{equation}
which is exactly the equation \eqref{eq_det} multiplied by $-r$. Hence we see that the points that satisfy the envelope condition are the points of the cyclographic images of the singular points and the circles of the two parameter family which have zero radii.
\end{proof}

\section{Minkowski arc interpolation of curves in $\mathbb R^{2,1}$}\label{sec_interpolationmethods}
The envelope of an evolving worm consists of boundaries of cyclographic images of certain curves in $\mathbb R^{2,1}$. We present and compare two pairs of methods for curve interpolation in the Minkowski space by Minkowski arcs either \emph{directly}, or \emph{indirectly}, by interpolating the boundaries of the cyclographic images of the original curves. In both cases, we interpolate the curve or the boundary of its cyclographic image either by (Minkowski) arcs or (Minkowski) biarcs. The resulting envelope of the evolving worm is then an arc spline.

We consider curves with space-like tangents only. Light-like tangents may be present at endpoints. The four schemes described below need to be modified to deal with these points. This is described in the appendix of the paper.  

\subsection{First pair: direct methods}
Let $\Omega \subset \mathbb R^2$ be a worm and let $C(v),\ v \in [0,1]$, be a curve in Minkowski space $\mathbb R^{2,1}$ such that
\begin{equation*}
  \Omega = \zeta(C(v)) \ .
\end{equation*}
The \emph{direct} methods interpolate the curve $C(v)$ by Minkowski (bi)arcs. The boundaries of the cyclographic images of interpolants are composed of circular arcs and approximate the boundary $\delta \Omega$. The MAT of the approximated worm is an Minkowski arc spline.

We assume that the curve $C(v)$ has only space--like or isolated light--like tangents, i.e., the worm $\Omega$ is the cyclographic image of the whole curve $C(v)$.

We choose $N \in \mathbb N$ and sample points on the curve $C(v),\ v \in [0,1]$, at $N+1$ parameter values
\begin{equation*}
  0 = v_0 < v_1 < \dots < v_N = 1 \ ,
\end{equation*}
and denote the samples
\begin{equation*}
  C_j = C(v_j),\ j = 1,\dots,N \ .
\end{equation*}
We interpolate segments of the curve $C(v)$ by Minkowski arcs (direct arc interpolation, \textbf{DAI}), or by Minkowski biarcs (direct biarc interpolation, \textbf{DBI}).
Then we compute the boundaries of the cyclographic images of the Minkowski (bi)arcs exactly. These curves are composed of circular arcs and approximate the boundary of the worm $\Omega$.

\subsection*{Direct arc interpolation}
We interpolate segments of the curve $C(v),\ v \in [0,1],$ in Minkowski space $\mathbb R^{2,1}$ by Minkowski arcs and approximate the boundary of the worm $\Omega$ by the boundaries of the cyclographic images of the Minkowski arcs.

\begin{flushleft}
Method \textbf{DAI}[$C(v),N$]\\
\textbf{Input:} \textit{Curve $C(v),v\in [0,1],$ in $\mathbb R^{2,1}$ with only space--like or isolated light--like tangents, $N\in \mathbb N$ even.}\\
\textbf{Output:} \textit{Superset of the boundary of the domain $\tilde \Omega$ that approximates the worm $\Omega$, $\MAT(\Omega) = \{C(v), v \in [0,1]\}$.}
\begin{enumerate}
 \item \textit{Sample the curve $C(v),\ v\in [0,1]$, at  $N+1$ parameter values}
 \begin{equation*}
  0 = v_0 < v_1 < \dots < v_N = 1 \ .
  \end{equation*}
 \item \textit{Compute points $C_j = C(v_j),\ j = 1,\dots,N$.}
 \item \textit{For each $k = 0,1,\dots,\frac{N}{2}-1$, interpolate the three consecutive points
 \begin{equation*}
   C_{2k},\ C_{2k+1},\ C_{2k+2} \ ,
  \end{equation*}
by a Minkowski arc $\tilde C_k(v),\ v \in [0,1]$, according to Proposition \ref{prop_MinkArc}.}
\item \textit{For each $k = 0,1,\dots,\frac{N}{2}-1$, compute the cyclographic image $\zeta(\tilde C_k(v)),\ v \in [0,1]$.}
\item \textit{The boundary of the domain $\tilde \Omega$ is a subset of the boundaries of the cyclographic images,}
\begin{equation*}
  \delta \tilde \Omega \subseteq \bigcup_{k=0}^{\frac{N}{2}-1} \left\{ \delta \zeta(\tilde C_k(v)),\ v \in [0,1] \right\}\ .
\end{equation*}
\end{enumerate}
\end{flushleft}

The approximated worm $\tilde \Omega$ is then a cyclographic image of the Minkowski arc spline
\begin{equation*}
 \tilde C(v) = \tilde C_k\left(\frac{N}{2} v - \frac{2k}{N}\right)\ ,  \quad v \in \left[\frac{2k}{N},\frac{2k+2}{N}\right]\ ,
\end{equation*}
where $k = 0,1,\dots,\frac{N}{2}-1$.
The Minkowski arc spline $\tilde C(v),\ v \in [0,1]$, is $C^0$ continuous at the points $C_{2k}$ for $k = 1,\dots,\frac{N}{2}-1$. However, the endpoints of the envelope curves of two consecutive segments $\tilde C_k(v)$ and $\tilde C_{k+1}(v)$ do not generally coincide. Therefore we need to include also the cyclographic images of the endpoints $C_k(0)$ and $C_k(1)$ of the Minkowski arc spline segments $\tilde C_k(v),\ v \in [0,1],$ for $k = 1,\dots,\frac{N-1}{2}$, to approximate the whole boundary of the worm $\Omega$.

\subsection*{Direct biarc interpolation}
We interpolate segments of the curve $C(v),\ v \in [0,1],$ in Minkowski space $\mathbb R^{2,1}$ by Minkowski biarcs and approximate the boundary of the worm $\Omega$ by the boundaries of the cyclographic images of the Minkowski biarcs.

\begin{flushleft}
Method \textbf{DBI}[$C(v),N$]\\
\textbf{Input:} \textit{Curve $C(v),v\in [0,1],$ in $\mathbb R^{2,1}$ with only space--like or isolated light--like tangents, $N\in \mathbb N$.}\\
\textbf{Output:} \textit{Superset of the boundary of the domain $\tilde \Omega$ that approximates the worm $\Omega$, $\MAT(\Omega) = \{C(v), v \in [0,1]\}$.}
\begin{enumerate}
\item \textit{Sample the curve $C(v),\ v\in [0,1]$, at  $N+1$ parameter values}
 \begin{equation*}
  0 = v_0 < v_1 < \dots < v_N = 1 \ .
  \end{equation*}
 \item \textit{Compute points $C_j = C(v_j),\ j = 1,\dots,N$.}
 \item \textit{At each point $C_j,\ j = 0,1,\dots,N$, compute the normalized tangent vector $\mathbf t_j$  of the curve $C(v)$,
  \begin{equation*}
   \mathbf t_j = \frac{C'(v_j)}{\sqrt{ \| C'(v_j)\|_M}} \ .
  \end{equation*}}
\item \textit{For each $j = 0,1,\dots,N-1$, we compute the control points of a Minkowski biarc that interpolates the consecutive pairs of data
  \begin{equation*}
    \left(C_j, \mathbf t_j \right) \quad \text{and} \quad  \left(C_{j+1}, \mathbf t_{j+1} \right)\
 \end{equation*}
and parameterize the arcs of the biarc $\tilde C_j(v),\ v\in [0,1]$, according to Proposition \ref{prop_MinkBezierArc}.}
\item \textit{For each $j = 0,1,\dots,N-1$ we compute the envelope curve}
\begin{equation*}
  e_{\tilde C_j}^{\pm}(v),\ v \in [0,1] \ .
\end{equation*}
\item \textit{The boundary of the domain $\tilde \Omega$ is a subset of the envelope curves and arcs of the cyclographic images of $C_0$ and $C_N$,}
\begin{equation*}
  \delta \tilde \Omega \subseteq \bigcup_{j=0}^{N-1} \left\{ e_{\tilde C_j}^{\pm}(v),\ v \in [0,1] \right\} \cup \zeta(C_0) \cup \zeta(C_N) \ .
\end{equation*}
\end{enumerate}
\end{flushleft}

Note that as we assume all the tangents of the curve $C(v),\ v \in [0,1]$, to be space--like, we can always normalize them.

The approximated worm $\tilde \Omega$ is then a cyclographic image of the Minkowski arc spline
\begin{equation*}
 \tilde C(v) = \tilde C_j\left(N v - \frac{j}{N}\right) \ , \quad v \in \left[\frac{j}{N},\frac{j+1}{N}\right]\ ,
\end{equation*}
where $j = 0,1,\dots,N-1$. The Minkowski arc spline is $G^1$ continuous and the envelope curves $e_{\tilde C}^{\pm}(v),\ v \in [0,1],$ are planar circular arc splines and are $C^1$ continuous at the points $C_j$ for $j = 2,\dots,N-1$.

\subsection{Second pair: indirect methods}
Let $\Omega \subset \mathbb R^2$ be a worm and let $C(v),\ v \in [0,1]$, be a curve in Minkowski space $\mathbb R^{2,1}$ such that
\begin{equation*}
  \Omega = \zeta(C(v)) \ .
\end{equation*}
The \emph{indirect} methods interpolate the boundary $\delta \Omega$, more specifically the envelope curves $e^{\pm}_C$, by planar circular (bi)arcs. The approximated worm is then a cyclographic image of a~geometric graph in $\mathbb R^{2,1}$, which has Minkowski arcs as its branches.

We assume that the curve $C(v)$ has only space--like or isolated light--like tangents. Furthermore, we assume that the envelope curves $e_C^{\pm}(v), \ v \in [0,1],$ do not have any singular points.

We choose $N \in \mathbb N$ and sample the curve $C(v),\ v\in [0,1]$, at  $N+1$ parameter values
 \begin{equation*}
  0 = v_0 < v_1 < \dots < v_N = 1 \ ,
 \end{equation*}
and we compute the corresponding envelope points
 \begin{equation*}
   e_j^{\pm}=e^{\pm}(v_j),\ j = 0,\dots,N \ .
 \end{equation*}
We interpolate the envelope points by planar circular arcs (indirect arc interpolation, \textbf{IAI}), or planar biarcs (indirect biarc interpolation, \textbf{IBI}). These curves form the boundary $\delta \tilde \Omega$ of the domain $\tilde \Omega$ which approximates the worm $\Omega$.

The arc spline $\delta \tilde \Omega$ approximates the boundary $\delta \Omega$ with approximation order 3, see \cite{Meek1995}.

\subsection*{Indirect arc interpolation}
We interpolate segments of the envelope curves $e^{\pm}_C(v)$ of the curve $C(v),v\in[0,1],$ by planar circular arcs.

\begin{flushleft}
Method \textbf{IAI}[$C(v),N$]\\
\textbf{Input:} \textit{Curve $C(v),v\in [0,1]$, in $\mathbb R^{2,1}$ with only space--like or isolated light--like tangents, such that $e^{\pm}_C(v), v\in [0,1]$, has no singular points, $N\in \mathbb N$ even.}\\
\textbf{Output:} \textit{Superset of the boundary of the domain $\tilde \Omega$ that approximates the worm $\Omega$, $\MAT(\Omega) = \{C(v), v \in [0,1]\}$.}
\begin{enumerate}
 \item \textit{Sample the curve $C(v),\ v\in [0,1]$, at  $N+1$ parameter values}
 \begin{equation*}
  0 = v_0 < v_1 < \dots < v_N = 1 \ .
 \end{equation*}
 \item \textit{For each $j = 0,1,\dots N$, compute the envelope points
 \begin{equation*}
   e_j^{\pm}=e^{\pm}(v_j),\ j = 0,\dots,N \ .
 \end{equation*}}
 \item \textit{For each $k = 0,1,\dots,\frac N 2 - 1$, interpolate the three consecutive points
  \begin{equation*}
   e^+_{2k}\ ,e^+_{2k+1}\ ,e^+_{2k+2} \ ,
  \end{equation*}
  by a planar circular arc $\tilde e^+_k(v),\ v \in [0,1]$}.
  \item \textit{The boundary of the domain $\tilde \Omega$ is a subset of the envelope curves $\tilde e^{\pm}_k(v),\ v \in [0,1],$ and arcs of the cyclographic images of $C_0 = C(0)$ and $C_N = C(1)$,}
  \begin{equation*}
  \delta \tilde \Omega \subseteq \bigcup_{k=0}^{\frac N 2 -1} \left\{\tilde e_k^{\pm}(v), v \in [0,1]\right\} \cup \zeta(C_0) \cup \zeta(C_N).
\end{equation*}
\end{enumerate}
\end{flushleft}

The arc splines
\begin{equation*}
 \tilde e^{\pm}(v) = \tilde e^{\pm}_k(v)\left(\frac{N}{2} v - \frac{2k}{N}\right) \ , \quad v \in \left[\frac{2k}{N},\frac{2k+2}{N}\right]\ ,
\end{equation*}
where $k = 0,1,\dots,\frac{N}{2}-1$, are $C^0$ continuous at the points $e^{\pm}_{2k+1},\ k = 1,\dots,\frac{N-3}{2}$.

The MAT of the domain $\tilde \Omega$ is a geometric graph whose branches are Minkowski arcs.

\subsection*{Indirect biarc interpolation}
We interpolate segments of the envelope curves $e^{\pm}(v)$ of the curve $C(v), v \in [0,1],$ by planar biarcs.

\begin{flushleft}
Method \textbf{IBI}[$C(v),N$]\\
\textbf{Input:} \textit{Curve $C(v),v\in [0,1],$ in $\mathbb R^{2,1}$ with only space--like or isolated light--like tangents, such that $e^{\pm}_C(v), v\in [0,1]$ has no singular points, $N\in \mathbb N$ even.}\\
\textbf{Output:} \textit{Superset of the boundary of the domain $\tilde \Omega$ that approximates the worm $\Omega$, $\MAT(\Omega) = \{C(v), v \in [0,1]\}$.}
\begin{enumerate}
 \item \textit{Sample the curve $C(v),\ v\in [0,1]$, at  $N+1$ parameter values}
 \begin{equation*}
  0 = v_0 < v_1 < \dots < v_N = 1 \ .
 \end{equation*}
 \item \textit{For each $j = 0,1,\dots N$, compute the envelope points}
 \begin{equation*}
   e_j^{\pm}=e^{\pm}(v_j),\ j = 0,\dots,N \ .
 \end{equation*}
 \item \textit{For each $j = 0,1,\dots,N,$ compute the unit tangent vectors $\mathbf t_j^{\pm}$ of the envelope curves $e^{\pm}_C(v)$ at the points $e_j^{\pm}$.}
  \item \textit{For every $j = 0,1,\dots,N,$ compute biarcs $\tilde e^{\pm}_j(v),\ v \in [0,1]$, interpolating the pairs}
  \begin{equation*}
    \left(e_j^{\pm},\ \mathbf t_j^{\pm}\right) \quad \text{and} \quad \left(e_{j+1}^{\pm},\ \mathbf t_{j+1}^{\pm}\right) \ .
  \end{equation*}
\item \textit{The boundary of the domain $\tilde \Omega$ is a subset of the envelope curves $\tilde e^{\pm}_j(v),\ v \in [0,1],$ and arcs of the cyclographic images of $C_0 = C(0)$ and $C_N = C(1)$,}
  \begin{equation*}
  \delta \tilde \Omega \subseteq \bigcup_{j=0}^{N -1} \left\{\tilde e_j^{\pm}(v), v \in [0,1]\right\} \cup \zeta(C_0) \cup \zeta(C_N).
\end{equation*}
\end{enumerate}
\end{flushleft}

The arc splines
\begin{equation*}
 \tilde e^{\pm}(v) = \tilde e^{\pm}_j(v)\left(\frac{N}{2} v - \frac{j}{N}\right) \ , \quad v \in \left[\frac{j}{N},\frac{j+1}{N}\right]\ ,
\end{equation*}
where $k = 0,1,\dots,N-1$, are $G^1$ continuous

In general, the MAT of the domain $\tilde \Omega$ is a geometric graph with Minkowski arcs as branches.

\begin{figure}
\captionsetup[subfigure]{labelformat=empty}
\centering
\begin{subfigure}[b]{0.5\textwidth}
\centering
\includegraphics[width=0.95\linewidth]{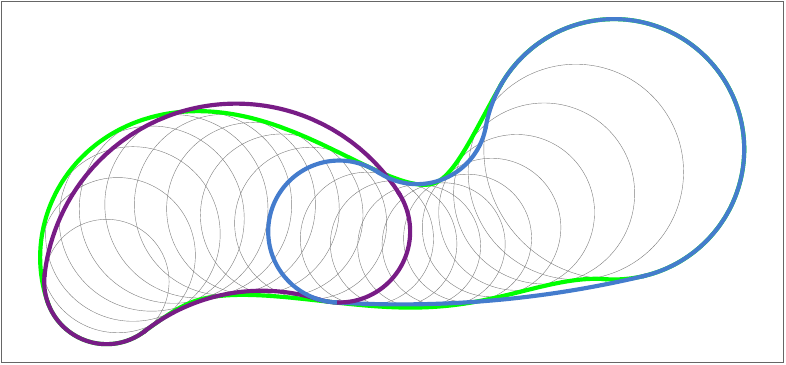}
\caption{\textbf{DAI}}
\end{subfigure}%
\begin{subfigure}[b]{0.5\textwidth}
  \centering
  \includegraphics[width=0.95\linewidth]{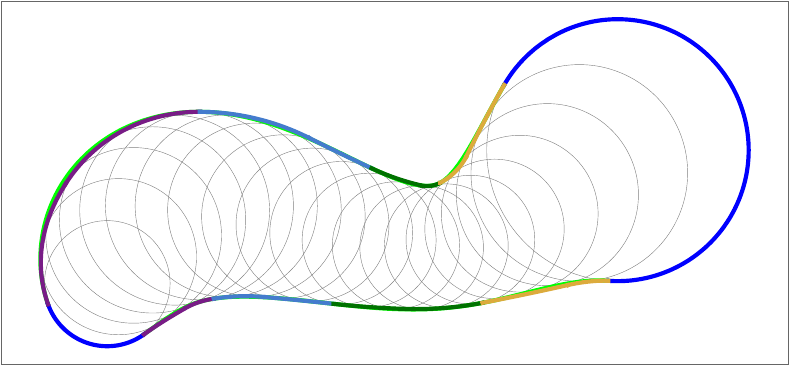}
  \caption{\textbf{DBI}}
\end{subfigure}
\par\bigskip
\begin{subfigure}[b]{0.5\textwidth}
\centering
\includegraphics[width=0.95\linewidth]{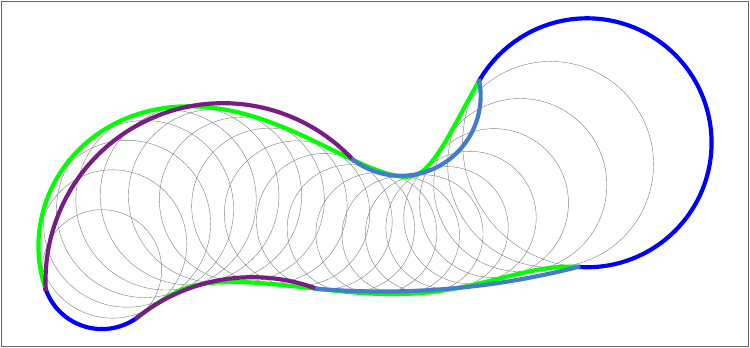}
\caption{\textbf{IAI}}
\end{subfigure}%
\begin{subfigure}[b]{0.5\textwidth}
  \centering
  \includegraphics[width=0.95\linewidth]{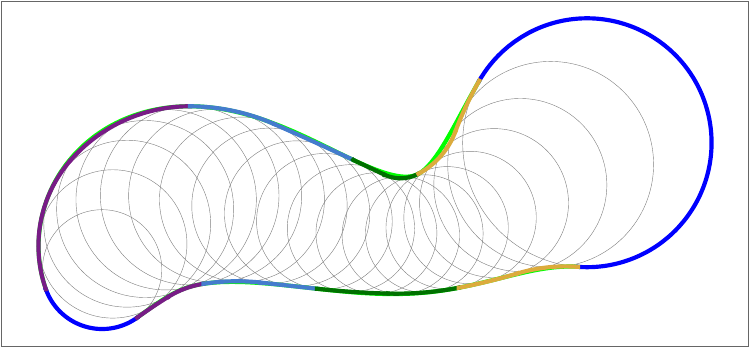}
  \caption{\textbf{IBI}}
\end{subfigure}
  \caption{Comparison of methods for approximation of curves in $\mathbb R^{2,1}$ by Minkowski arcs for $N=4$.}.
  \label{fig_envApprox}
\end{figure}

\subsection{Comparison of the methods}
We compare the four methods \textbf{DAI}, \textbf{DBI}, \textbf{IAI},
and \textbf{IBI} for the approximation of the boundary  of the worm $\Omega$ whose MAT is represented by a curve $C(v),\ v \in [0,1],$ with respect to the number of arcs, approximation power and continuity. The results are summarized in Table \ref{table_comparison}.

Let $N \in \mathbb N$ be and even number. We sample the interval $[0,1]$ uniformly and $N + 1$ parameter values and denote $h = \frac 1 N$.

\subsubsection*{Number of arcs}
The methods described above return the superset of the arcs that approximate the  boundary of worm $\Omega$. These arcs are the input for the sweep line algorithm that extracts the real envelope. We compare the cardinality of the supersets, which gives an upper bound of the number of arc in the approximated boundary.

\textbf{DAI}: We interpolate the curve $C(v)$ by $\frac N 2$ Minkowski arcs. The boundaries of the cyclographic images of the Minkowski arcs are not continuous, hence we need to include $N-1$ arcs of the circles
\begin{equation*}
  \zeta(C(ih)) \quad \text{for} \quad i = 1,2,\dots,\frac N 2 -1\ ,
\end{equation*}
and the two arcs of the circles $\zeta(C(0))$ and $\zeta(C(1))$. In total the number of arcs in the boundary $\delta \tilde \Omega$ is bounded by
\begin{equation*}
  2 \left(\frac N 2 \right) + 2\left(\frac N 2 - 1\right) + 2 = 2N \ .
\end{equation*}

\textbf{DBI}: We interpolate the curve $C(v)$ by $N$ Minkowski biarcs, whose envelopes are composed of $4N$ arcs. Adding the two arcs of the cyclographic images $\zeta(C(0))$ and $\zeta(C(1))$ of the endpoints of the curve, we obtain
$4N+2$ circular arcs in total.

\textbf{IAI}: We interpolate each of the envelope curves $e^{\pm}_C(v)$ by $\frac N 2$ arcs. Adding the two arcs of the circles $\zeta(C(0))$ and $\zeta(C(1))$ yields $N + 2$ arcs.

\textbf{IBI}: We interpolate each of the envelope curves $e^{\pm}_C(v)$ by $N$ biarcs. Together with the two arcs of the circles $\zeta(C(0))$ and $\zeta(C(1))$ we obtain $4N + 2$ arcs.

\subsubsection*{Approximation power}
The indirect methods \textbf{IAI} and \textbf{IBI} approximate the boundary of the worm $\Omega$ with approximation order 3. The direct methods \textbf{DAI} and \textbf{DBI} approximate the boundary only with approximation order 2. We consider the error with respect to the number of segments, or number of arcs (Figure \ref{fig_convergence}, right). Among the methods considered, the \textbf{IAI} method generates the lowest number of arcs for a given accuracy.

\subsubsection*{Continuity of $\delta \tilde \Omega$}
Regarding the continuity, the boundary $\delta \tilde \Omega$ is $C^0$ continuous for the methods \textbf{DAI} and \textbf{IAI} and $C^1$ continuous in the case of the methods \textbf{DBI} and \textbf{IBI}.

\begin{figure}[t]
\centering
\begin{subfigure}[b]{0.4\textwidth}
\raisebox{3.1cm}{\centering

\includegraphics[width=0.95\linewidth]{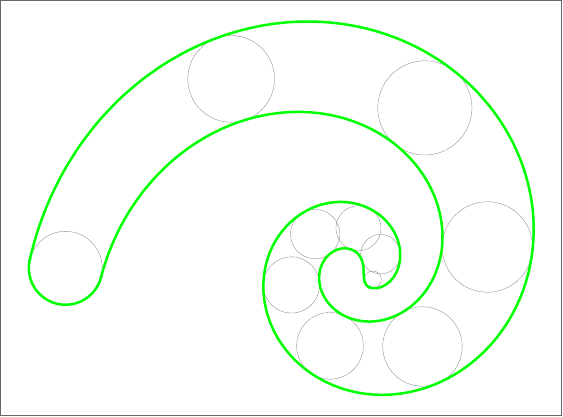}}
\end{subfigure}%
\begin{subfigure}[b]{0.6\textwidth}
  \centering
  \includegraphics[width=0.95\linewidth]{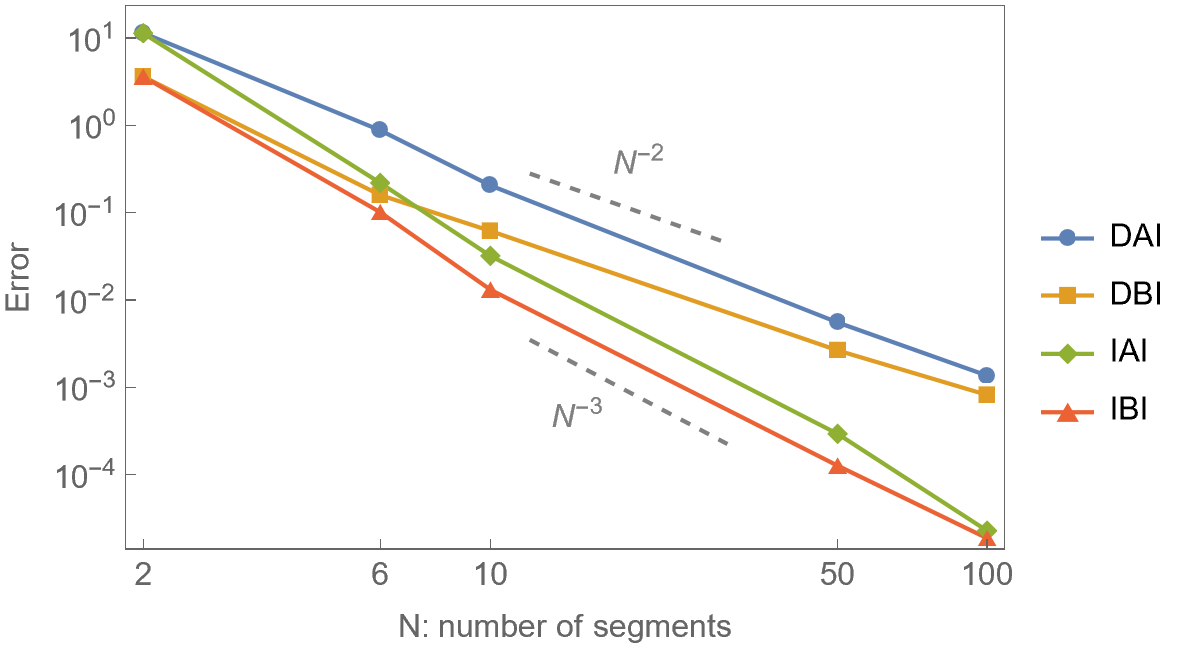}  \includegraphics[width=0.95\linewidth]{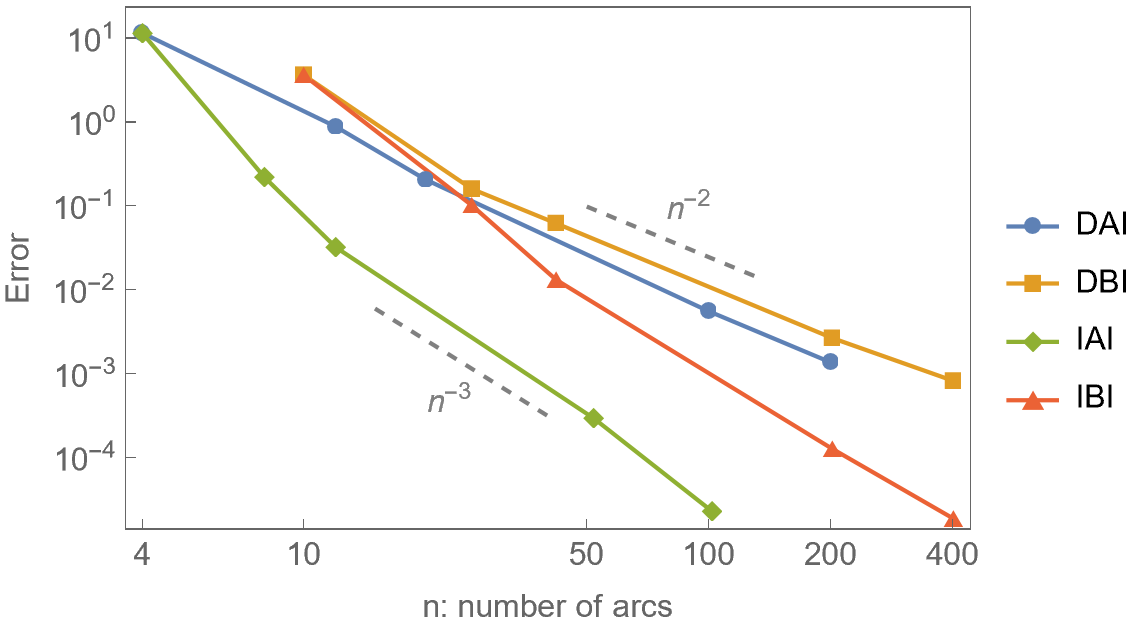}
\end{subfigure}
\caption{The indirect methods \textbf{IAI} and \textbf{IBI} approximate the boundary of a worm with approximation order 3. The direct methods \textbf{DAI} and \textbf{DBI} approximate the boundary with approximation order 2.}
\label{fig_convergence}
\end{figure}

\begin{table}[h]
\centering
\begin{tabular}{|l|cccc|}
\hline
 & \textbf{DAI} & \textbf{DBI} & \textbf{IAI} & \textbf{IBI} \\ \hline
\# of arcs & $2N$ & $4N+2$ & $N+2$ & $4N+2$ \\
continuity &  $C^0$ & $C^1$ & $C^0$ & $C^1$  \\
approximation power &  2 & 2 & 3 & 3 \\ \hline
\end{tabular}
\caption{Comparison of the methods for interpolating curves in $\mathbb R^{2,1}$.}
\label{table_comparison}
\end{table}

In summary, when deciding which method to use, we can follow the quidelines outlined here:
\begin{itemize}
  \item \textbf{IBI} offers the best option when $C^1$ smoothness is desired.
  \item \textbf{IAI} produces the lowest number of arcs for a given precission.
  \item \textbf{DAI} is the simplest method, which does not require additional effort in the case when tangent vectors of the curve $C(v) \in \mathbb R^{2,1}$ are close to light--like vectors (see Appendix for more details). 
\end{itemize}

\subsection{Adaptive sampling}
In practice, better results are obtained by sampling the curve $C(v), v \in [0,1]$, adaptively rather than uniformly.
For the sake of brevity, we focus on the \textbf{IAI} method, which was found to produce the lowest number of arcs for a given accuracy when uniform sampling was used. We present a simple adaptive procedure based on a user--defined parameter $\delta$. The adaptive procedure could be formulated similarly for the other three methods. 

\begin{flushleft}
\begin{enumerate}
 \item \textit{Sample the curve $C(v),\ v\in [v_0,v_1]$, at the parameter values $v_0, \frac{v_0+v_1}{2}, v_1$.}
 \item \textit{Compute the envelope points $e_j^{\pm}=e^{\pm}(v_j),\ j = 0,\frac 1 2,1 $}.
 \item \textit{Interpolate the three envelope points by a planar circular arcs $\tilde e^\pm(v),\ v \in [0,1]$.}
  \item \textit{Compute the maximal distances $d^\pm$ between the envelope curves $e^\pm$ and their approximations $\tilde e^\pm$}.
  \item \textit{If $d^\pm > \delta$, divide the interval in half and repeat the procedure on both intervals $[v_0,\frac{v_0+v_1}{2}]$ and $[\frac{v_0+v_1}{2},v_1]$ on the branch $e^+(v)$ or $e^-(v)$ of the envelope curves.}
\end{enumerate}
\end{flushleft}

We can obtain lower errors for the same number of arcs using the
adaptive sampling method. Figure \ref{fig_adaptive1} shows an example
of a worm approximated by the uniform \textbf{IAI} method (left, top)
and the adaptive \textbf{IAI} method (left, bottom). To achieve the
same error, the adaptive method needs less arcs. In the example, the
error is $0.05$, and the envelope obtained by sampling uniformly uses
36 circular arcs, while the adaptive approach needs only 24 arcs. A
similar behavior was observed previously in \cite{vrablikova2024} for
a different application of arc splines, where the approximation of
envelopes of envelopes of planar swept volumes can be improved by a
constant factor when adaptive sampling is used. However, this does not
affect the overall rate of convergence.

\begin{figure}[t]
\centering
\begin{subfigure}[b]{0.4\textwidth}
{\centering
\includegraphics[width=0.95\linewidth]{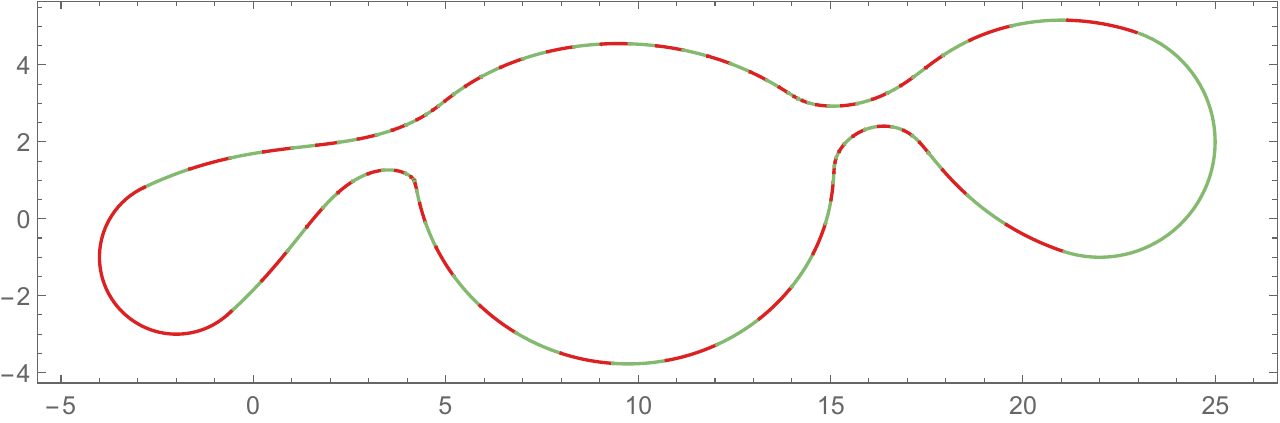}
\includegraphics[width=0.95\linewidth]{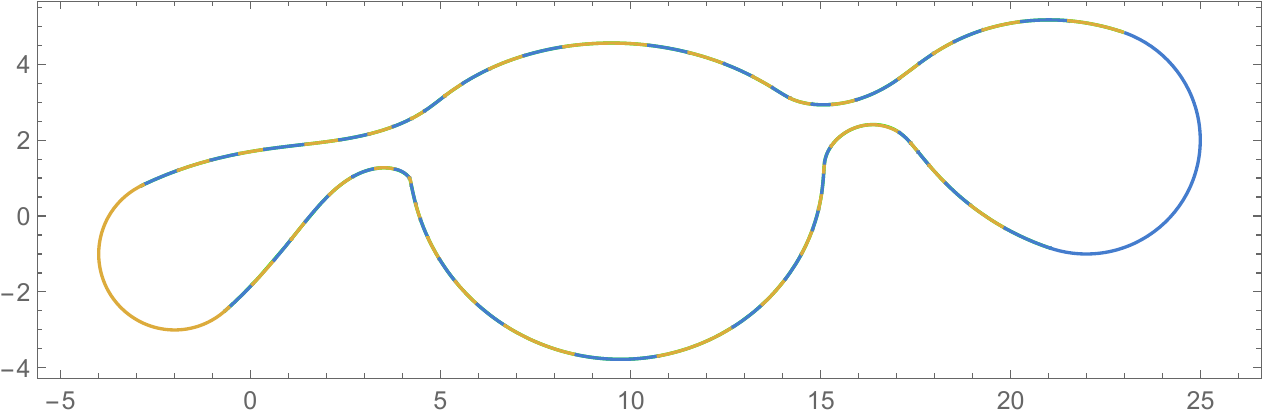}
}
\end{subfigure}%
\begin{subfigure}[b]{0.6\textwidth}
\raisebox{-5mm}{\centering 
 \includegraphics[width=0.95\linewidth]{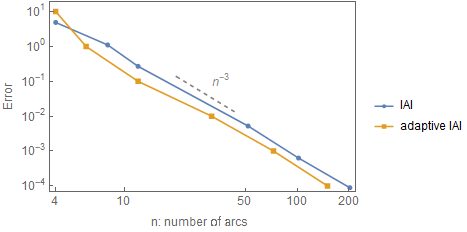}
}
\end{subfigure}
\caption{The adaptive approach for the \textbf{IAI} method results in lower errors for the same number of arcs.}
\label{fig_adaptive1}
\end{figure}

\section{Envelope computation}\label{sec_envComp}
In this section we describe the computation of the envelope of an evolving worm using one of the four method presented above. 

Let $\Omega(u,t),\ u,t \in [0,1],$ be an evolving worm defined by a piecewise smooth rational surface $B(u,t),\ u,t \in [0,1]$, in Minkowski space $\mathbb R^{2,1}$. By Theorem \ref{thm_singularCurves} we know that the envelope of $\Omega(u,t)$ is a subset of the boundaries of the cyclographic images of the boundary and the singular curves on the surface $B(u,t)$.  To approximate the boundary curves, can we simply use one of the methods \textbf{DAI}, \textbf{DBI}, \textbf{IAI} or \textbf{IBI} for the curves
\begin{align*}
  & B(u,0),\ B(u,1),\quad u \in [0,1] \ ,\\
  & B(0,t),\ B(1,t),\quad t \in [0,1] \ .
 \end{align*}
The singular curves are defined by Equation \eqref{eq_det}. The equation defines an algebraic curve in the parameter domain. Adapting a method for tracing algebraic curves from \cite{lee1999}, we trace points $(u_i,t_i)^T$ on the curve in the parameter domain and tangent vectors $\f v^i = (v_u^i,v_t^i)^T$ at those points. From this data, we recover the singular points $\f p^i=B(u_i,t_i)^T$ on the surface $B(u,t)$ and tangent vectors \begin{equation*}
    \boldsymbol\tau^i =  v^i_u \cdot \left.\left(\frac{\partial B}{\partial u}\right)^\perp\right|_{u=u_i,t = t_i} 
     +
     v^i_t \cdot  \left.\left(\frac{\partial B}{\partial t}\right)^\perp\right|_{u=u_i,t = t_i} \ .
\end{equation*}
We divide the singular curves into segments with space--like, light--like and time--like tangents and only consider those with space--like tangents and possibly light--like tangents at the endpoints. The segments with time--like tangents do not correspond to a real envelope and hence we do not need to consider them further.

\begin{figure}[t]
\centering
\begin{minipage}{0.32\textwidth}
\centering
\includegraphics[width=0.95\linewidth]{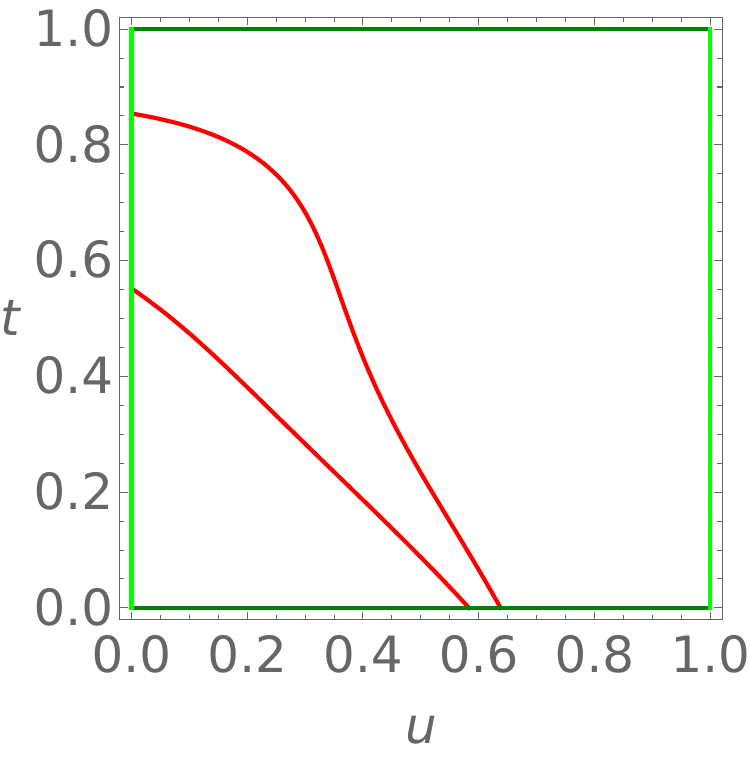}
\end{minipage}%
\begin{minipage}{0.335\textwidth}
\raisebox{5mm}{  \centering
  \includegraphics[width=0.95\linewidth]{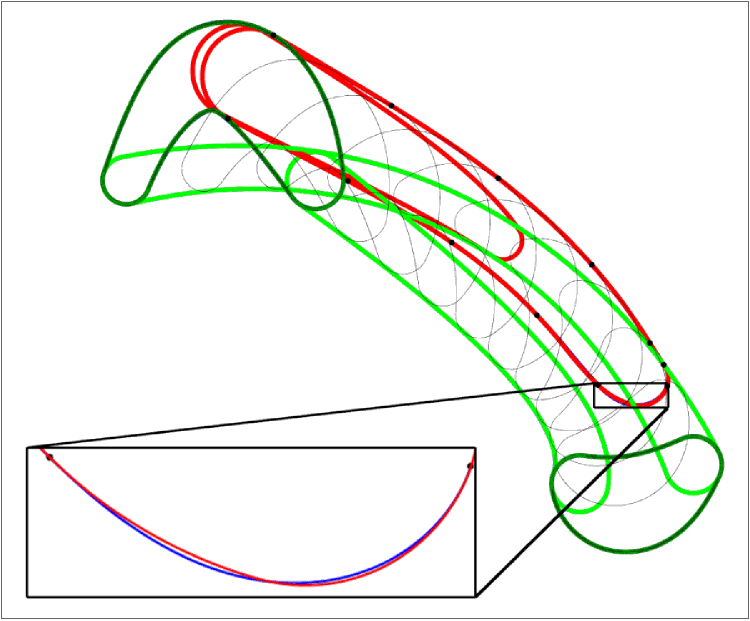}}
\end{minipage}%
\begin{minipage}{0.335\textwidth}
\raisebox{5mm}{  \centering
  \includegraphics[width=0.92\linewidth]{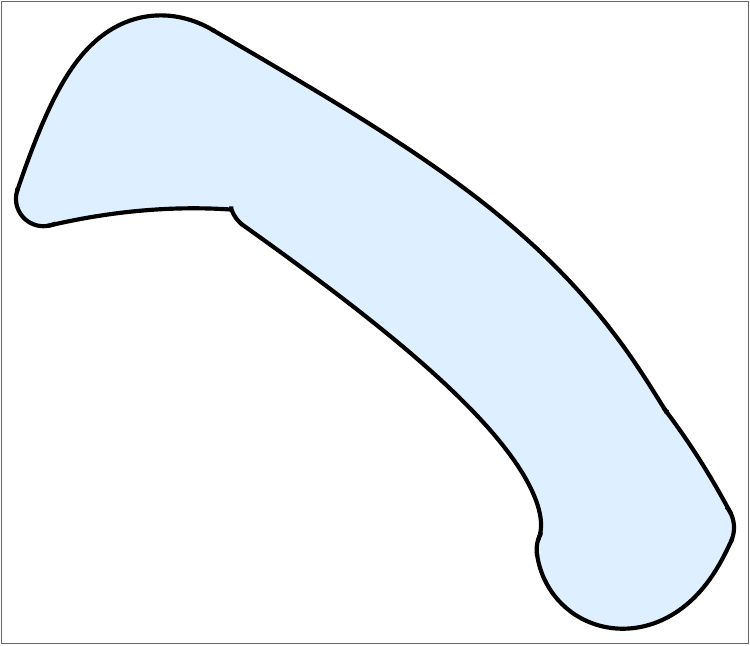}}
\end{minipage}
\caption{Curves in the parameter domain define the boundary (green) and the singular (red) curves (left). The approximation of the evolving worm using the method \textbf{IBI} and $N=4$ (centre). The boundaries of the cyclographic images of the boundary curves $B(u,0),\ B(u,1)$, the boundary curves $B(0,t),\ B(1,t)$ and the singular curves are depicted in blue, their approximations in light green, dark green and red, respectively. The envelope of the evolving domain is a subset of the boundaries of the cyclographic images (right).}
\label{fig_singCurvePlane}
\end{figure}

To approximate the singular curve using the method \textbf{DAI}, we only need the points $\f p^i$. For the other methods, we also need the tangent vectors $\boldsymbol \tau^i$. For the method \textbf{IBI}, the tangent vectors $\f t^i$ of the envelope points are also necessary, see Equations \eqref{eq_tangentLineEnv} and \eqref{eq_envPts} for the construction of the tangent vectors and the envelope points. 

Figure \ref{fig_singCurvePlane} (left) shows the singular and the boundary curves on the surface $B(u,t)$ depicted in Figure \ref{fig_singCurves} in the parameter domain. We sampled $5$ points on the curves and computed the Hermite data on the envelope curves to approximate the envelope of the envelope of the evolving worm using the method \textbf{IBI}. The outer envelope can be extracted after computing intersections of the biarcs using a line sweep algorithm, see Figure \ref{fig_singCurvePlane} (right).

\section{Envelopes of evolving free-form domains}\label{sec_freeform}

We apply the developed method for computation of envelopes of evolving worm to evolving free--form domains.

\begin{figure}[h]
  \centering
  \includegraphics[width=\linewidth]{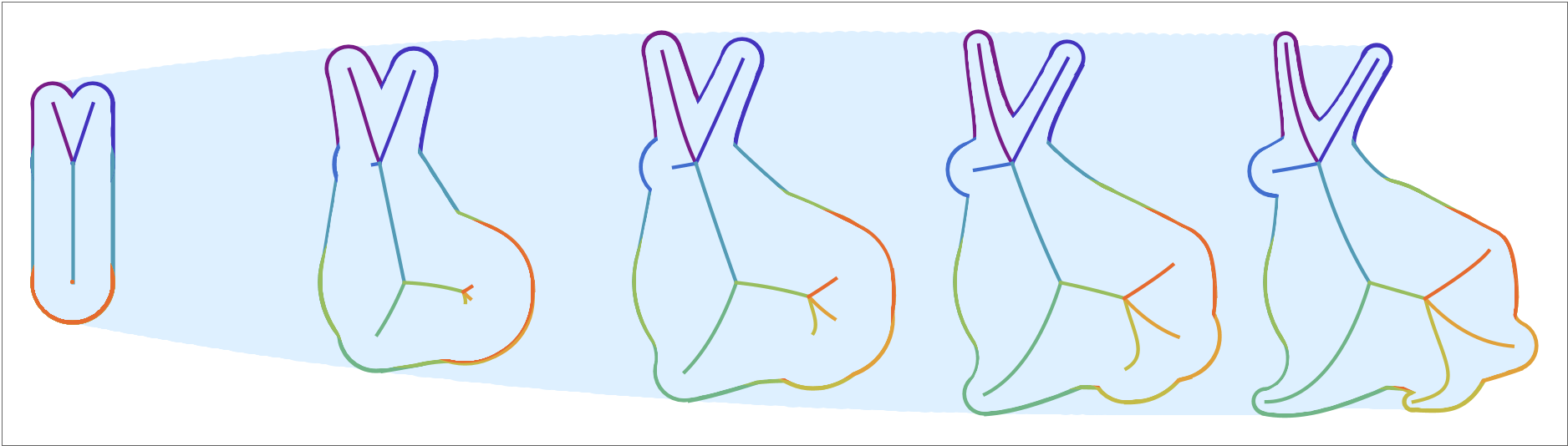}
  \caption{Evolving planar domain which is composed of 9 evolving worms.}
  \label{fig_bunny}
\end{figure}

\subsection{Free--form domains}
We interpret a free--form domain $\Omega \subset \mathbb R^2$ as a union of worms, e.g., the cyclographic images of the branches of the $\MAT(\Omega)$. 
Let $B_1(u),\dots,B_n(u)$ for some $n\in \mathbb N$ and $u\in [0,1]$ be curves in Minkowski space $\mathbb R^{2,1}$,
\begin{equation*}
    B_i(u): [0,1] \rightarrow \mathbb R^{2,1} \ .
\end{equation*}
Each curve $B_i(u)$ defines a worm $\Omega_i$,
\begin{equation*}
    \Omega_i = \zeta(B_i(u))\ , 
\end{equation*}
and the domain $\Omega$ is the union of the worms
\begin{equation*}
    \Omega = \bigcup_{i=1}^n \zeta(B_i(u)) = \bigcup_{i=1}^n \Omega_i \ .
\end{equation*}

A free--form domain $\Omega(t)$ evolving in time $t \in [0,1]$, is then represented as a union of evolving worms. Each of the evolving worms 
\begin{equation*}
    B_1(u,t),\dots,B_n(u,t)\ ,
\end{equation*}
defines a surface in the Minkowski space,
\begin{equation*}
 B_i(u,t):[0,1]^2 \rightarrow \mathbb R^{2,1} \ , \quad i = 1,\dots,n \ .
\end{equation*}

We assume that for any $i = 1,\dots, n$, there exists an interval $\bar I_i \subset [0,1]$ such that for each $t_0 \in \bar I_i$, the curve $B_i(u,t_0), \ u \in [0,1]$, has non--zero length.

\subsection{Envelope computation}
To compute the envelope of an evolving free-form domain $\Omega(t),\ t\in [0,1]$, we consider each of the evolving worms separately. The procedure consists of the following two steps:
\begin{enumerate}
\item For every $i = 1,\dots,n$, we approximate the envelope of the evolving worm $\Omega_i$ separately. That is, on each surface 
\begin{equation*}
    B_i(u,t),\quad u,t\in [0,1] \ ,
\end{equation*}
we identify the boundary and the singular curves and approximate them (and their cyclographic images) using one of the methods \textbf{DAI}, \textbf{DBI}, \textbf{IAI} or \textbf{IBI}.

\item The envelope of the evolving domain $\Omega(t)$ is a subset of the envelopes of the evolving worms $\Omega_i(t)$.
We extract the envelope of the evolving planar domain $\Omega(t)$. Using a~variant of the sweep--line algorithm for circular arcs, see \cite{Bentley1979}, we compute intersections of the envelope curves and extract the outer envelope of the evolving domain.
\end{enumerate}

\subsection{Examples}
The method for calculating the envelopes of evolving free--form domains is illustrated by the following three examples.

\begin{example}
    Consider the evolving domain from Figure \ref{fig_bunny}. At each time parameter, the domain is a union of $9$ worms. The evolving worms form surfaces 
 \begin{equation*}
     B_i(u,t):[0,1]^2 \rightarrow \mathbb R^{2,1} \ , \quad i =1,\dots,9 \ .
 \end{equation*}
Several worms evolve from a point into a curve of non--zero length.

\begin{figure}[t]
  \centering
  \begin{subfigure}{\textwidth}
    \centering
      \includegraphics[width=0.85\linewidth]{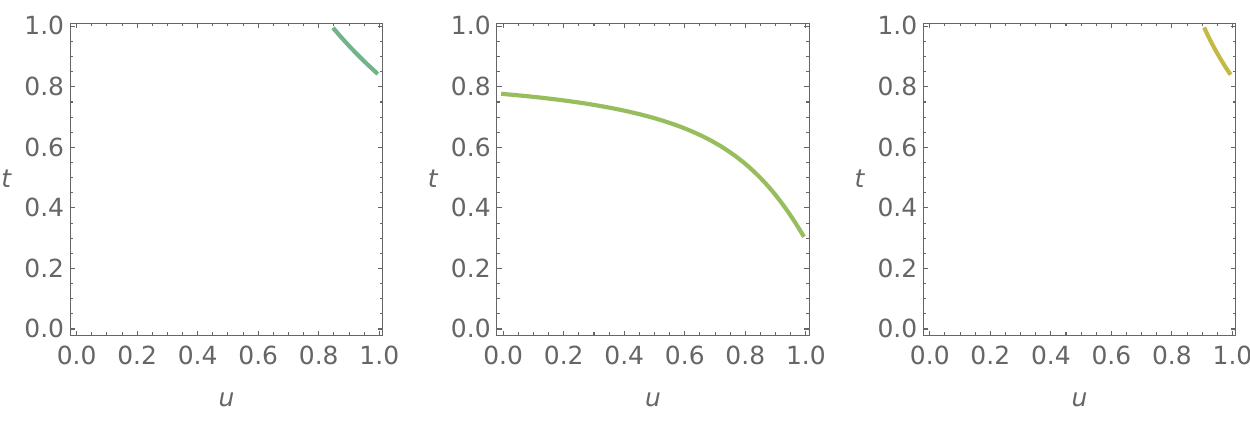}
        \caption{The singular curves in the parameter domain corresponding to three different evolving worms.}
     \label{fig_bunnySingC}
  \end{subfigure}
  
\end{figure}

\begin{figure}[t]
\ContinuedFloat
\begin{subfigure}{\textwidth}
    \centering
  \includegraphics[width=\linewidth]{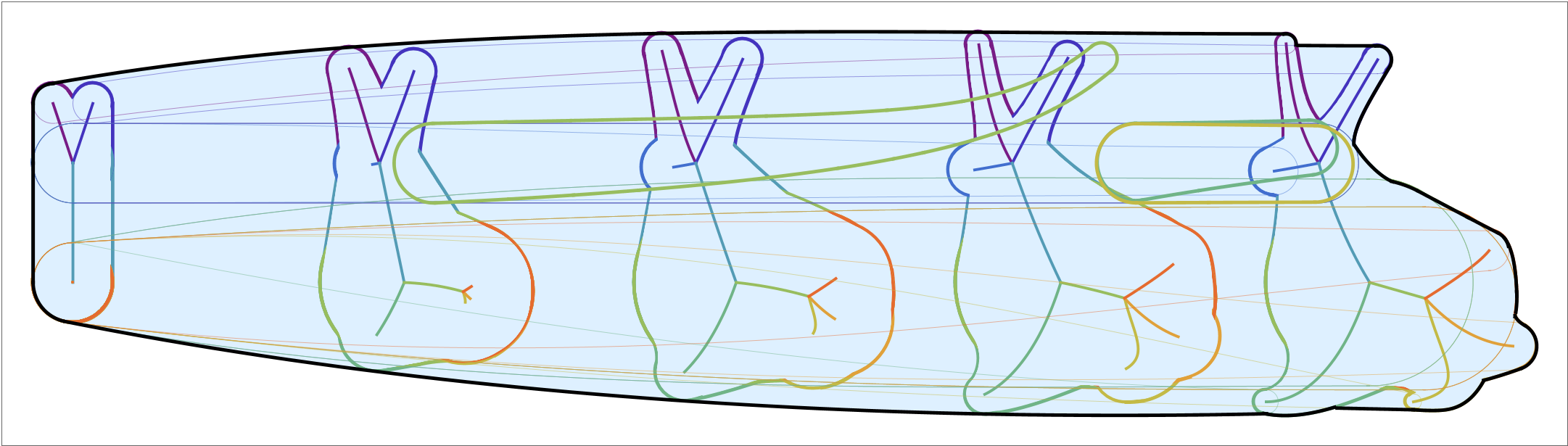}
  \caption{Superset of the envelope of the evolving domain consists of boundaries of cyclographic images of the singular (thick) and the boundary curves. The outer envelope is shown in black.}
  \label{fig_bunnyEnv}
\end{subfigure}
\caption{Example 1.}
\end{figure}

We approximate each evolving branch separately, by approximating the worms corresponding to the boundary and the singular curves of each of the surfaces
\begin{equation*}
  B_i(u,t)\ ,\quad i = 1,\dots, 9 \ .
\end{equation*}
In this example, the only evolving worms with singular curves are those shown in dark green, light green and yellow (see Figure \ref{fig_bunnySingC} and \ref{fig_bunnyEnv}). We approximate the boundaries of the cyclographic images of the singular curves and of the boundary curves using the method \textbf{IBI}. In this case, the cyclographic images of the singular curves do not contribute to the envelope of the evolving domain and are trimmed away. The envelope of the evolving domain is then extracted from the interpolations of the cyclographic images of the boundary curves.  
\end{example}

\begin{example}
Let $\Omega(t),\ t \in [0,1]$ be the evolving domain shown in Figure \ref{fig_ex2domain}. At each time instance, the domain is a union of three worms. In this case, the MAT of each worm has a non--zero length at each time parameter.

\begin{figure}[b]
\centering
\begin{subfigure}[t]{0.5\textwidth}
\centering
\includegraphics[width=0.6\linewidth]{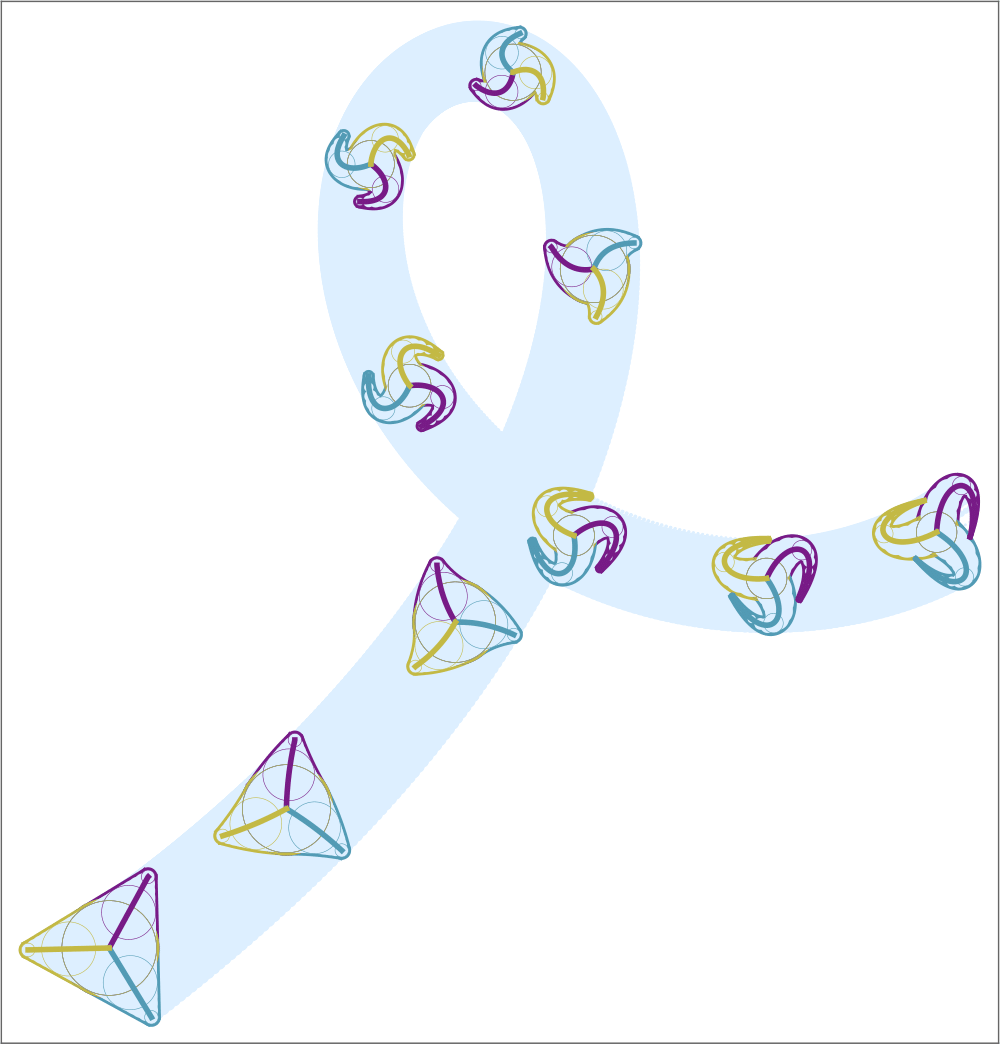}
\caption{Evolving planar domain $\Omega(t)$. At every $t \in [0,1]$ it consists of three worms.}
\label{fig_ex2domain}
\end{subfigure}%
\begin{subfigure}[t]{0.5\textwidth}
  \centering
  \includegraphics[width=0.6\linewidth]{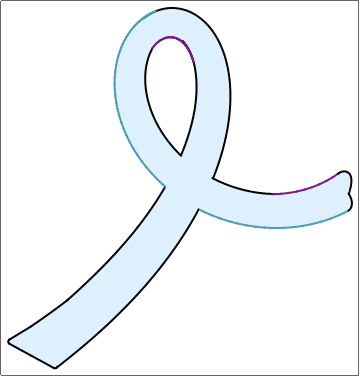}
\caption{The outer envelope of $\Omega(t)$.} 
\label{fig_ex2env}
\end{subfigure}
\end{figure}

We approximate every evolving branch separately. For each $i = 1,2,3,$ we identify the boundary and the singular curves on the surface
\begin{equation*}
    B_i(u,t), \quad (u,t) \in [0,1]^2 \ .
\end{equation*}
and approximate the corresponding worms using the method \textbf{IBI}. The cyclographic images of the singular curves are necessary to approximate the envelope of the evolving domain, approximately 32 \% of the envelope arcs are contributed by the singular curves. Figure \ref{fig_ex2superset} shows the superset of the envelope with highlighted cyclograpic images of the singular curves and Figure \ref{fig_ex2env} depicts the extracted outer envelope, where the arcs contributed by the singular curves are highlighted. 

\begin{figure}[t]
\ContinuedFloat
\centering
\begin{subfigure}[t]{\textwidth}
  \centering
  \includegraphics[width=0.4\linewidth]{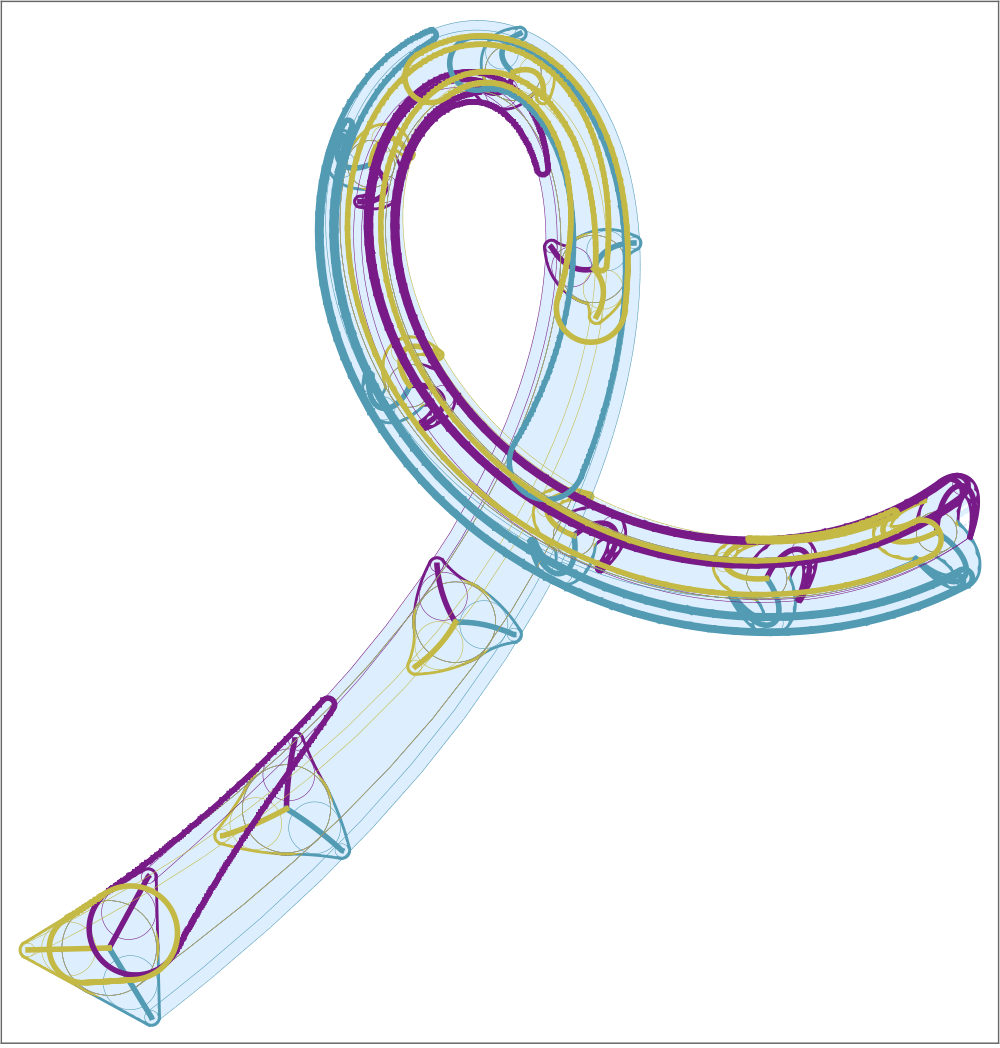}
\caption{The superset of the envelope consists of boundaries of cyclographic images of both the singular (thick) and the boundary curves.}
\label{fig_ex2superset}
\end{subfigure}%
\caption{Example 2.}
\end{figure}
\end{example}

\begin{example}
    The presented method allows us to compute envelopes of evolving domains, where the number of components changes. The domain in Figure \ref{fig_ex3domain} is composed of two worms at each instance of the time parameter. At the beginning, the worms are connected, and as the domain evolves, they separate, creating two components. We identify the boundary and the singular curves on both surfaces in the Minkowski space and again use the method \textbf{IBI} to approximate the envelope. Figure \ref{fig_ex3boundary} shows the boundaries of the cyclographic images of the boundary curves of the both worms. To obtain the whole envelope, we also need to consider the singular curves. Figure \ref{fig_ex3env} shows the extracted envelope of the evolving domain, where the boundaries of the cyclographic images of the singular curves of both worms are highlighted. In this example, approximately 14 \% of the envelope arcs are contributed by the singular curves.
\end{example}

\begin{figure}[h]
\centering
\begin{subfigure}{\textwidth}
    \includegraphics[width=\textwidth]{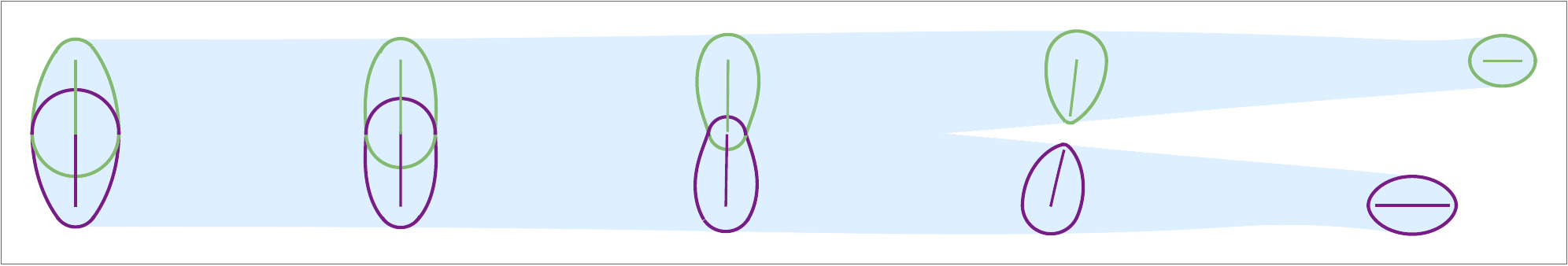}
    \caption{Evolving domain composed of two worms. At five instances of the time parameter, the worms and their MA are depicted.}
    \label{fig_ex3domain}
\end{subfigure}
\end{figure}

\begin{figure}[h]
\ContinuedFloat
\begin{subfigure}{\textwidth}
    \includegraphics[width=\textwidth]{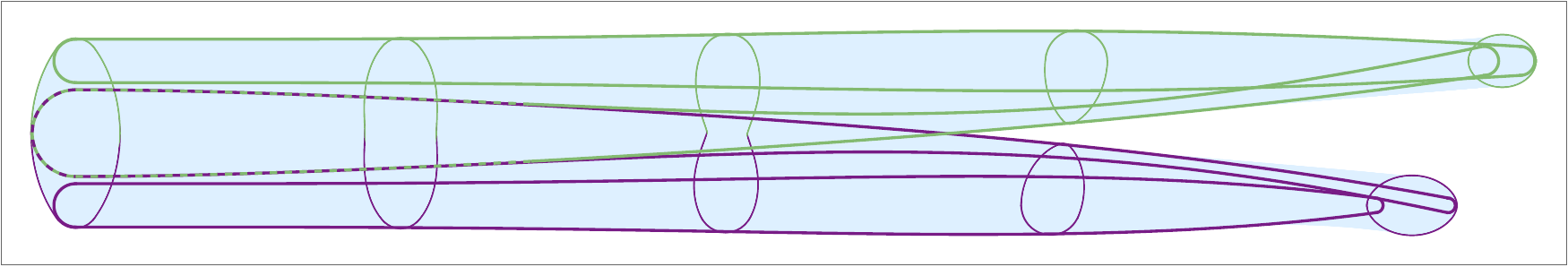}
    \caption{Boundaries of the cyclographic images of the boundary curves.}
    \label{fig_ex3boundary}
\end{subfigure}
\begin{subfigure}{\textwidth}
    \includegraphics[width=\textwidth]{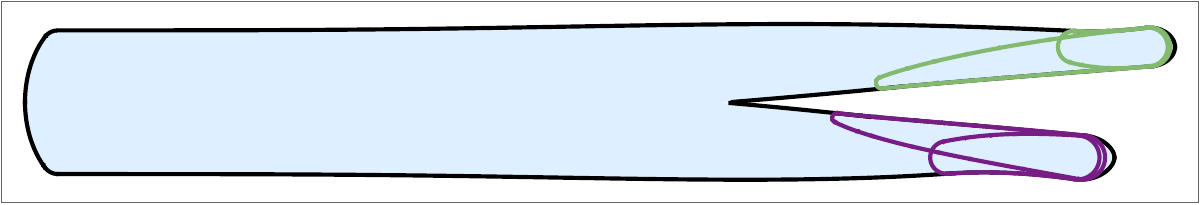}
    \caption{Extracted outer envelope (black) and the boundaries of the cyclographic images of the singular curves.}
    \label{fig_ex3env}
\end{subfigure}
        
\caption{Example 3.}
\label{fig_ex3}

\end{figure}

\section{Conclusion}\label{sec:1523}
We presented a procedure for approximating the envelope of planar domains that evolve in time. First, we considered worms, which are cyclographic images of single curve segments in $\mathbb R^{2,1}$. We characterized the envelopes of evolving worms as subsets of cyclographic images of finitely many curves in Minkowski space. We then proposed two pairs of methods to approximate these curves and their cyclographic images by arc splines. The methods exploit geometric flexibility and computational simplicity of circular arcs to efficiently and precisely approximate the envelopes.
We considered free--form domains to be unions of worms. Finally, the envelope of an evolving (free--form) domain was obtained by trimming the redundant curves efficiently using a line sweep algorithm that again utilized the properties of circular arcs.

\section*{Appendix: Points with light--like tangents}\label{appendix}
In Section \ref{sec_interpolationmethods}, we presented four methods for approximation curves in the Minkowski space $\mathbb R^{2,1}$ and their cyclographic images. We only considered curves with space--like tangents, however, light--like tangents may be present at endpoints. Figure \ref{fig:lightlike} shows such a situation. 

\begin{figure}[h!]
  \centering
  \includegraphics[width=0.4\textwidth]{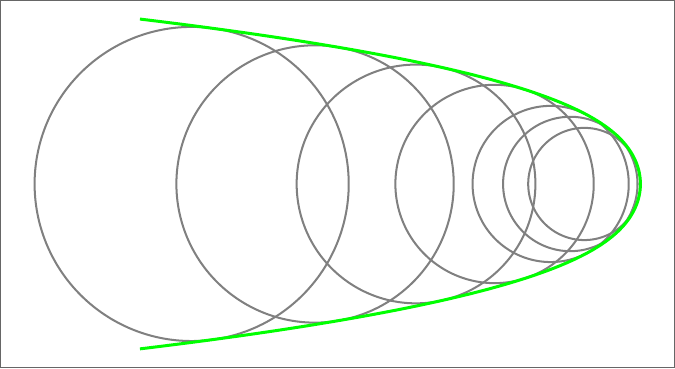}
  \caption{Minkowski curve segment with a light-like tangent at an endpoint.}
  \label{fig:lightlike}
\end{figure}

We discuss the effect of these points for the four methods:
\begin{itemize}
  \item
\textbf{DAI}: The method can be applied without any problems. The arcs of the circles that correspond to the endpoints must be added to the envelope.
  \item
\textbf{DBI}: The method cannot be used if one of the segment end points has a light-like tangent. Let $v_N = 1$ be the last parameter value, such that the tangent at the point $C_N = C(v_N)$ is light--like, i.e., $\|C(v_N)'\|_M = 0$. We sample one additional point on the curve $C(v)$,
\begin{equation*}
  C_{N-\frac 1 2} = C(\frac{v_{N-1}+v_N}{2}) \ ,
\end{equation*}
and we use a single Minkowski arc that interpolates the points 
$C_{N-1},C_{N-\frac 1 2},C_N$ to represent the last segment. See Figure \ref{fig:lastSemgent} for an illsutration.
  \item
\textbf{IAI}: The method can be used, however a reparameterization of the boundary curve is needed since the parametric speed tends to infinity as we approach the point with a light--like tangent, which corresponds to a vertex of the boundary curve.
  \item
\textbf{IBI}: The method can also be used, again after performing a reparameterization of the boundary. 
\end{itemize}

\begin{figure}[h]
  \centering
  \includegraphics[width=0.5\textwidth]{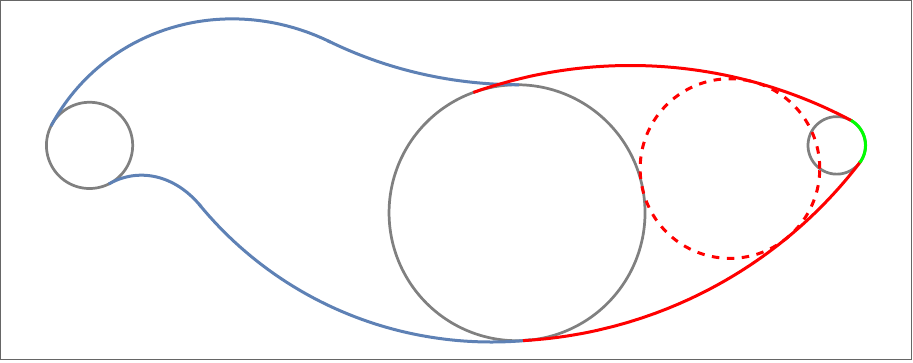}
  \caption{Handling time--like tangents in the \textbf{DBI} method. The last segment is represented by a single Minkowski arc. We sample one additional point on the curve to compute the Minkowski arc (red dashed circle).}
  \label{fig:lastSemgent}
\end{figure}

\subsection*{Acknowledgment}
This research was funded in whole or in part by the Austrian Science Fund (FWF) 10.55776/I5270.

\bibliography{references}

\end{document}